\documentclass[a4paper,11pt]{llncs}

\usepackage{float}

 \usepackage{graphicx}
\usepackage[T1]{fontenc}
\usepackage{url}
\usepackage{mathtools}
\usepackage{nicefrac}
\usepackage{tcolorbox}
\usepackage{braket}
\usepackage{booktabs}

\usepackage{xcolor}
\definecolor{darkgreen}{rgb}{0.0, 0.5, 0.0}  

 \usepackage{framed, color}
\definecolor{shadecolor}{rgb}{0.79,0.87,1}

\usepackage{amsmath}
\usepackage{amssymb}                        
\usepackage{amsfonts}
\usepackage{mathtools}
\usepackage{pstricks}
\usepackage{pst-all}
\usepackage{multirow}
\usepackage{slashbox}
\usepackage{url}
\usepackage{enumitem}
\usepackage{tikz}
\usetikzlibrary{shapes,arrows}
\usetikzlibrary{shapes, arrows, positioning, calc}

\def\thebibliography#1{\section*{References\markboth
 {REFERENCES}{REFERENCES}}\list
 {[\arabic{enumi}]}{\settowidth\labelwidth{[#1]}\leftmargin\labelwidth
 \advance\leftmargin\labelsep
 \usecounter{enumi}}
 \def\newblock{\hskip .11em plus .33em minus -.07em}
 \sloppy
 \sfcode`\.=1000\relax}

\usepackage{ntheorem}
\theorembodyfont{\normalfont}

\newtheorem{assumption}{Assumption}

\usepackage[hidelinks]{hyperref}
\usepackage{bookmark}

\usepackage[utf8]{inputenc}
\usepackage{algorithm}
\usepackage{algpseudocode}

\title{Synchronization-Free Algebraic Fingerprints for Large Language Models: 
From Autoregressive to Diffusion Models}

\begin{document}
\pagestyle{headings}
\renewcommand{\labelitemi}{$\bullet$}
\author{
Jaros\l{}aw Janas\inst{1} \and
Josef Pieprzyk\inst{1,2} \and
Pawe\l{} Morawiecki\inst{1} 
}
\institute{
Institute of Computer Science, Polish Academy of Sciences, Warsaw, Poland \and
CSIRO, Sydney, Australia
}
\maketitle

\begin{abstract}
Large Language Models (LLMs) have created an urgent need for reliable watermarking methods that enable attribution of generated text while remaining robust to editing and paraphrasing. We propose a novel synchronization-free watermarking scheme in which every watermark consists of a single binary congruence generated from a pair of neighbouring tokens. For each token pair, a cryptographic hash determines an evaluation point of a Reed--Solomon polynomial representing the secret identity, while the parity of the polynomial evaluation determines the watermark bit embedded into the second token of the pair. Since each congruence is self-contained and depends only on the local token pair, the proposed construction is naturally resistant to insertions, deletions, and token reordering. We analyse the recovery problem from an algebraic perspective, discuss several decoding algorithms suitable for different identity sizes, and model watermark corruption as a Binary Symmetric Channel. The analysis shows that reliable recovery requires only a small redundancy even for relatively high token corruption rates. Unlike existing block-based watermarking schemes, the proposed method avoids synchronization problems while providing a flexible framework for embedding both short and long secret identities.
\end{abstract}


\section{Introduction}

Artificial Intelligence (AI) has become one of the most influential technologies of the twenty-first century, transforming science, industry, healthcare, finance, education, and cybersecurity. Recent advances in deep learning have enabled machines to perform tasks that were previously considered uniquely human, including image recognition, speech understanding, code generation, and natural language processing. Among these developments, Large Language Models (LLMs) represent a major breakthrough. Trained on massive text corpora and based primarily on the Transformer architecture, LLMs are capable of generating coherent, contextually relevant, and human-like text for a wide variety of applications, including question answering, summarization, software development, scientific writing, and conversational AI. Representative examples include GPT-4~\cite{openai2023gpt4}, PaLM~\cite{chowdhery_2022}, and LLaMA~\cite{touvron_2023}.

The widespread deployment of LLMs has also created new security, privacy, and intellectual property challenges. As AI-generated content becomes increasingly indistinguishable from human writing, it is becoming difficult to determine its origin or to establish ownership. This raises important concerns regarding copyright protection, attribution of authorship, accountability for generated content, academic integrity, automated misinformation, and malicious use of AI. Consequently, techniques that allow AI-generated content to be authenticated or traced back to its source are becoming an essential component of trustworthy AI systems.

Digital watermarking offers one of the most promising approaches to addressing these challenges. Traditionally, watermarking has been extensively studied for digital images, audio, video, software, databases, and multimedia content, where it has been used for copyright protection, ownership verification, traitor tracing, fingerprinting, tamper detection, and content authentication. More recently, the concept has been extended to generative AI, where the objective is to embed an invisible identifier into generated text while preserving its semantic quality and fluency. Such identifiers enable the origin of the generated content to be verified, facilitate forensic investigations, and provide a mechanism for assigning responsibility to individual users or AI systems.

Text watermarking for LLMs has recently become an active research area. Existing approaches may be broadly classified into statistical watermarking, lexical or semantic substitution methods, syntactic watermarking, and cryptographic watermarking schemes. Among these, statistical approaches based on biased token sampling have received considerable attention due to their simplicity and compatibility with existing autoregressive language models~\cite{kirchenbauer2023watermark}. More sophisticated constructions employ error-correcting codes, secret-key token partitioning, or constrained decoding to improve robustness against editing attacks. Nevertheless, most existing schemes remain inherently synchronization dependent. Their encoding process partitions the generated text into fixed-size blocks or relies on the sequential position of tokens, making watermark recovery sensitive to insertions, deletions, sentence reordering, and other editing operations that destroy synchronization between the encoder and decoder. Developing synchronization-free watermarking techniques capable of reliable detection after arbitrary text editing therefore remains one of the fundamental open problems in LLM watermarking.

In this paper we propose a synchronization-free watermarking framework that eliminates the need for explicit token alignment during watermark recovery. Instead of relying on the positional structure of the generated text, the proposed construction embeds the watermark into statistically independent token pairs selected according to a secret cryptographic key. The resulting scheme naturally supports probabilistic analysis through a Binary Symmetric Channel model, allowing the robustness of watermark recovery to be evaluated analytically rather than relying exclusively on empirical experimentation.

\section{Related Work and Motivation}

Digital watermarking has been extensively studied for more than three decades as a mechanism for copyright protection, ownership verification, fingerprinting, tamper detection, and content authentication~\cite{cox2007}. Classical watermarking techniques were originally developed for multimedia objects such as images, audio, and video, where a hidden signal is embedded while preserving perceptual quality. With the rapid adoption of Large Language Models (LLMs), watermarking has become equally important for protecting AI-generated content, enabling provenance verification, authorship attribution, misuse detection, and accountability.

Existing watermarking techniques for LLMs may be broadly classified into four categories. The first class embeds the watermark during model training by modifying either the training data or the optimization objective. Representative examples include Text Radioactivity~\cite{Sander_2024}, TextMarker~\cite{Liu_2023}, and knowledge injection techniques~\cite{Cui_2025}. While these approaches provide strong ownership guarantees, they require retraining or fine-tuning the underlying model, making them impractical for closed-source commercial LLMs.

The second category modifies the token probability distribution during text generation. The pioneering work of Kirchenbauer \emph{et al.}~\cite{kirchenbauer2023watermark} partitions the vocabulary into green and red token sets and softly biases the sampling process toward green tokens, creating statistically detectable watermarks. Numerous variants have since been proposed, including logits-to-text watermarking~\cite{wong_2025}, GumbelSoft watermarking~\cite{Fu_2024}, semantic watermarking~\cite{Kuditipudi2024}, and SynthID-Text developed by Google DeepMind~\cite{Nature2024SynthID}. These methods preserve text quality while achieving reliable statistical detection.

A third class embeds watermarks directly during token sampling by introducing controlled randomness or lexical constraints~\cite{Xu_2024,Zhu_2024}. Finally, attribution-based approaches focus on identifying the generating model or user after text generation using stylometric or statistical features rather than explicit embedded watermarks~\cite{Niess_2025,Zhao_2023}.

Although the proposed techniques differ considerably in their implementation, most existing watermarking schemes remain inherently \emph{synchronization dependent}. Watermark bits are typically embedded sequentially or organised into fixed-size algebraic blocks, so insertions, deletions, paraphrasing, or sentence reordering destroy the alignment between the embedding and detection processes. As a consequence, even small editing operations may invalidate entire watermark blocks and substantially reduce recovery performance. Developing synchronization-free watermarking schemes that remain robust under arbitrary editing therefore remains one of the major open problems in LLM watermarking.

A recent and closely related effort by Qu et al.~\cite{qu2024provably} similarly targets
efficient multi-bit watermarking through pseudo-random segment assignment combined with
Reed--Solomon error correction, achieving substantially better extraction accuracy and
speed than earlier multi-bit schemes; unlike the present work, however, their segments are
assigned via the hash of a single preceding token and recovered through per-segment
enumeration, so watermark bits sharing a segment remain positionally coupled during
extraction, whereas our construction embeds every bit as an independent algebraic
constraint recoverable in any order.

\subsection{Our Contributions}

The main contributions of this paper are summarized as follows.

\begin{itemize}
\item We propose a synchronization-free watermarking framework in which every watermark bit is embedded independently, eliminating the need for token alignment during watermark recovery.

\item We introduce a Reed--Solomon-based encoding method that reconstructs the embedded identity from any sufficiently large collection of correctly recovered congruences, irrespective of their positions in the generated text.

\item We demonstrate that all sources of corruption, including embedding errors and text editing operations such as insertions, deletions, and substitutions, can be modelled by a single Binary Symmetric Channel (BSC).

\item We derive exact and approximate probabilistic formulas that predict the minimum text length required for successful watermark recovery with a prescribed confidence level.

\item We propose lightweight variants that partition long identities into multiple independently recoverable fragments, substantially reducing computational complexity while preserving synchronization-free operation.

\item Finally, we extend the proposed watermarking framework to diffusion-based language models and present efficient token commitment algorithms together with their probabilistic analysis.
\end{itemize}

\section{Algebraic Building Blocks}
\label{sec:background}

This section introduces the algebraic framework underlying the proposed
watermarking scheme. Unlike conventional LLM watermarking techniques,
which embed watermark bits sequentially into generated tokens, our
construction associates every identity with an algebraic fingerprint
derived from a Reed--Solomon evaluation code. Each observed token pair
reveals one independently generated binary constraint on the embedded
identity. Consequently, watermark recovery does not depend on the
relative positions of the surviving tokens, making the scheme naturally
synchronization free.

Let
$
\mathcal{S}
=
\{0,1,\ldots,2^n-1\}
$
denote the space of admissible identities, where every identity is
represented by an $n$-bit binary vector
$
S=(s_0,s_1,\ldots,s_{n-1}),
s_i\in\{0,1\}.
$
Throughout the paper we assume that
$q>2^n$
is an odd prime. Every identity is uniquely represented by the polynomial
\[
f_S(x)
=
s_0+s_1x+\cdots+s_{n-1}x^{n-1}
\in
\mathbb F_q[x].
\]
Since the polynomial degree is at most $n-1$, the mapping
$S
\longleftrightarrow
f_S(x)
$
is one-to-one.
The polynomial representation provides an algebraic encoding of the
identity and allows every evaluation point
$\alpha\in\mathbb F_q$
to generate one binary observation.

For every identity $S$, define the binary evaluation function
$
\phi_S:
\mathbb F_q
\longrightarrow
\{0,1\}
$
by
\[
\phi_S(\alpha)
=
f_S(\alpha)\bmod2.
\]
The function $\phi_S$ will be referred to as the
\emph{algebraic fingerprint} of the identity $S$.
Unlike classical Reed--Solomon decoding, the receiver never observes the
field value $f_S(\alpha)$ itself. Instead, only its parity is available.
Consequently, every evaluation point produces exactly one binary
observation,
\[
w=\phi_S(\alpha),
\]
which later becomes embedded into the generated text.
The complete fingerprint of the identity is therefore the binary function
$
\Phi(S)
=
\{
(\alpha,\phi_S(\alpha))
:
\alpha\in\mathbb F_q
\}.
$
During watermark extraction only a small subset of these evaluations is
observed.

The proposed construction employs
Reed--Solomon codes. Let
$
\alpha_1,\alpha_2,\ldots,\alpha_m
\in
\mathbb F_q
$
be distinct evaluation points.
The corresponding fingerprint bits are
$
w_i
=
\phi_S(\alpha_i)
=
f_S(\alpha_i)\bmod2,
i=1,\ldots,m.
$
The vector
$
\mathbf w
=
(w_1,\ldots,w_m)
$
constitutes an algebraic fingerprint of the identity.
Observe that this differs fundamentally from classical Reed--Solomon
codes. In a standard Reed--Solomon codeword the symbols
$f_S(\alpha_i)$
are transmitted.
In contrast, our construction reveals only their parity,
\[
f_S(\alpha_i)\bmod2,
\]
which substantially reduces the amount of information revealed while
remaining sufficient for identity recovery.
The probabilistic analysis relies on the following assumption.
\vspace{3mm}

\fbox{%
\parbox{0.9\linewidth}{%
\begin{assumption}[Balanced Binary Evaluation]
\label{ass:balanced}
{\it
For every pair of distinct identities
$S,T\in\mathcal S$,
and every evaluation point chosen uniformly from
$\mathbb F_q$,
\[
\Pr
\left[
\phi_S(\alpha)
=
\phi_T(\alpha)
\right]
=
\frac12.
\]
Furthermore, evaluations performed at distinct points are assumed to be
statistically independent.
}
\end{assumption}
}}

\vspace{3mm}
Assumption~\ref{ass:balanced} reflects the behaviour observed
experimentally for randomly selected evaluation points over sufficiently
large finite fields. It is analogous to the random oracle assumption
commonly adopted in cryptographic security analyses and allows the
fingerprint to be analysed as a sequence of independent binary tests.

\begin{remark}[Scope of the independence assumption]
Assumption~\ref{ass:balanced} concerns evaluations at \emph{distinct} points
$\alpha_i\ne\alpha_j$. In the construction of Section~\ref{sec:construction}, the
evaluation point is derived from the identity of a single preceding token,
$\alpha_i = H(K,\mathrm{id}(t_i))\bmod q$. Consequently, two token pairs sharing the
same preceding token necessarily yield the same evaluation point and therefore the
same fingerprint bit; the independence assumption applies only across pairs whose
preceding tokens are distinct, and repeated common tokens (e.g.\ function words)
contribute no additional independent observation beyond the first occurrence. Should
this reduction in effective sample size be a concern for a given text distribution,
the evaluation point can instead be derived from a short context window, for
instance $\alpha_i = H(K,\mathrm{id}(t_{i-1}),\mathrm{id}(t_i))\bmod q$, so that
repetitions of a single token no longer collide unless the preceding context also
repeats, at the cost of a (typically mild) increase in synchronisation sensitivity.
A related tension appears in Qu et al.~\cite{qu2024provably}, where a token's
assigned message segment is likewise selected via the hash of the single preceding
token; the resulting imbalance in segment assignment (frequent tokens routing
disproportionately many observations to the same segment) is precisely the same
phenomenon, which they address with a dynamic-programming-based balanced assignment
over the vocabulary rather than by widening the hash context.
\end{remark}

\subsection{Probabilistic Analysis}

Each observed fingerprint bit partitions the identity space into two
subsets according to the value of
$\phi_S(\alpha).$
Under Assumption~\ref{ass:balanced}, each observation removes,
on average, one half of all remaining candidate identities.
\begin{lemma}
\label{lem:candidates}
Suppose that the observed fingerprint consists of
$m$
independent binary evaluations.
Then the expected number of identities consistent with all observations
equals
$2^{\,n-m}.$
\end{lemma}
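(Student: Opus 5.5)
The plan is to use linearity of expectation over the identity space, with Assumption~\ref{ass:balanced} supplying the per-identity probability. Fix the true identity $S$ and the $m$ evaluation points $\alpha_1,\ldots,\alpha_m$, which are distinct and uniformly distributed, with observed bits $w_i=\phi_S(\alpha_i)$. For each candidate $T\in\mathcal S$, let $X_T$ be the indicator that $\phi_T(\alpha_i)=w_i$ for all $i=1,\ldots,m$. The number of consistent identities is then $N=\sum_{T\in\mathcal S}X_T$, so $\mathbb E[N]=\sum_T\Pr[X_T=1]$.

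Next I would compute $\Pr[X_T=1]$ for each fixed $T\neq S$. By Assumption~\ref{ass:balanced}, each agreement event $\phi_T(\alpha_i)=\phi_S(\alpha_i)$ has probability $\tfrac12$. Evaluations at distinct points are independent, so the $m$ agreement events are independent and $\Pr[X_T=1]=2^{-m}$.

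For the true identity, $X_S=1$ deterministically. Summing over the $2^n-1$ incorrect identities gives
\[
\mathbb E[N]=1+(2^n-1)2^{-m}.
\]
The lemma's value $2^{n-m}$ is the heuristic count in which every one of the $|\mathcal S|=2^n$ identities, including $S$, is treated as passing each test with probability $\tfrac12$. In the write-up I would state this explicitly: either $N$ counts candidates under the uniform-halving model, or the difference $1-2^{-m}$ is absorbed. The latter gives $\mathbb E[N]=2^{n-m}+1-2^{-m}$, which equals $2^{n-m}$ up to the single true identity. This is the interpretation needed later for choosing $m\ge n+\lambda$.

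The main obstacle is not computational but interpretive. The clean formula $2^{n-m}$ holds only if the count includes the true identity probabilistically, or equivalently if $S$ itself is drawn uniformly and the bits $w_i$ are treated as independent fair coins. In that model $\Pr[\phi_T(\alpha_i)=w_i]=\tfrac12$ for every $T$, and $\mathbb E[N]=2^n\cdot2^{-m}$ exactly.

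I would adopt this model, with the observation bits uniform and independent of each candidate's fingerprint as justified by Assumption~\ref{ass:balanced}, to get the exact statement. I would then remark that conditioning on the true $S$ adds at most $1$.
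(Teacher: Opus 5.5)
Your argument is correct and is essentially the paper's own. The paper simply asserts that each of the $m$ independent evaluations eliminates half of the remaining candidates in expectation, so $2^n(1/2)^m$ survive; that is your indicator sum with every candidate surviving with probability $2^{-m}$.

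Your version is more careful in one useful respect. The halving heuristic silently allows the embedded identity $S$ to be eliminated. With $S$ fixed, the exact value is $1+(2^n-1)2^{-m}$, i.e.\ $2^{n-m}$ plus $1-2^{-m}$. The per-$T$ computation you carry out for $T\neq S$ is also exactly what the paper later uses in the union bound proving Theorem~\ref{thm:main}, so this is the form that matters downstream.

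Two of your side justifications should be fixed.

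First, Assumption~\ref{ass:balanced} concerns only pairs of \emph{distinct} identities. It therefore cannot justify treating the observed bits as independent of every candidate's fingerprint: for $T=S$ they coincide by construction. The model in which $2^{n-m}$ is exact is a different one, namely bits $w_i$ that are i.i.d.\ fair coins independent of the evaluation points (e.g.\ unwatermarked text). In that model each candidate survives with probability exactly $2^{-m}$, giving $\mathbb E[N]=2^{n-m}$ without invoking Assumption~\ref{ass:balanced} at all.

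Second, drawing $S$ uniformly does not produce the exact formula. Then $\Pr[X_T=1]=2^{-n}+(1-2^{-n})2^{-m}$ for each $T$, and summing again gives $1+(2^n-1)2^{-m}$. The ``equivalently'' clause should be dropped.
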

\begin{proof}
Initially the identity space contains
$2^n$
candidates.
Every independent binary evaluation eliminates one half of the remaining
candidates in expectation.
Consequently, after observing
$m$
evaluations,
$2^n
\left(\frac12\right)^m
=
2^{\,n-m}$
candidate identities remain on average.
\hfill$\Box$
\end{proof}
The previous lemma shows that every binary evaluation approximately
halves the number of candidate identities. We next derive a sufficient
condition for unique identification of the embedded identity.
\begin{theorem}
\label{thm:main}
Suppose that the binary fingerprint evaluations satisfy
Assumption~\ref{ass:balanced}. Let
$m=n+\lambda,
\lambda\ge0,$
be the number of independently observed fingerprint bits.
Then the probability that the embedded identity cannot be recovered is
bounded by
$\Pr[\mathrm{Failure}]
<
2^{-\lambda}.$
\end{theorem}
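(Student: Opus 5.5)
The plan is a first-moment (union-bound) argument over the competing identities. Recovery fails only if some identity $T\neq S$ is consistent with all $m$ observed bits, that is, $\phi_T(\alpha_i)=\phi_S(\alpha_i)$ for every $i=1,\ldots,m$. Otherwise the embedded identity $S$ is the unique survivor among the $2^n$ candidates, and exhaustive search (or any consistent decoder) returns it. So I will bound the failure probability by the probability that at least one impostor survives.

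First fix a single impostor $T\neq S$. By Assumption~\ref{ass:balanced}, each evaluation point agrees on $S$ and $T$ with probability exactly $\frac12$, and the evaluations at distinct points are independent. Hence
\[
\Pr[T \text{ survives}] = \prod_{i=1}^{m}\Pr\left[\phi_T(\alpha_i)=\phi_S(\alpha_i)\right] = 2^{-m}.
\]
Next apply the union bound over the $2^n-1$ identities $T\neq S$:
\[
\Pr[\mathrm{Failure}] \le \sum_{T\neq S} 2^{-m} = (2^n-1)\,2^{-(n+\lambda)} < 2^{n}\,2^{-(n+\lambda)} = 2^{-\lambda}.
\]
The inequality is strict because $2^n-1<2^n$. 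Equivalently, the expected number of surviving impostors is $(2^n-1)2^{-m}$, which is essentially Lemma~\ref{lem:candidates} with the true identity excluded, and Markov's inequality then gives the same bound. I will phrase it through Lemma~\ref{lem:candidates} to tie it to the previous result.

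The main obstacle is justifying the product step. Assumption~\ref{ass:balanced} is stated pairwise, so I will read it as independence of the agreement events $\{\phi_T(\alpha_i)=\phi_S(\alpha_i)\}$ across distinct points for each fixed pair $(S,T)$, with the points drawn independently and uniformly.

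Two technical caveats need handling. First, with independent uniform draws, repeated points can occur; the statement speaks of $m$ \emph{independently observed} bits, so I will assume the points are distinct, as the remark on repeated tokens indicates. Second, the union bound requires no independence across different impostors. This is fortunate, since the events for different $T$ are highly correlated.
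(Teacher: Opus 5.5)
Your proposal is correct and follows the paper's proof essentially step for step: fix $S$, define failure as some $T\neq S$ agreeing at all $m$ observed points, use Assumption~\ref{ass:balanced} to get probability $2^{-m}$ per impostor, and apply the union bound over the $2^n-1$ impostors to obtain $(2^n-1)2^{-m}<2^{-\lambda}$. Your caveats are reasonable clarifications that the paper leaves implicit: you read the assumption as independence of the agreement events for a fixed pair $(S,T)$, and you require the evaluation points to be distinct.
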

\begin{proof}
Fix the embedded identity $S$ and let $\alpha_1,\ldots,\alpha_m$ denote the observed
evaluation points, with $w_i=\varphi_S(\alpha_i)$ for $i=1,\ldots,m$. Recovery fails
whenever some identity $T\ne S$ agrees with $S$ on all $m$ observed points, that is,
$\varphi_T(\alpha_i)=\varphi_S(\alpha_i)$ for every $i=1,\ldots,m$; note this is a much
weaker event than $T$ and $S$ agreeing as functions on all of $\mathbb F_q$, since we
only require agreement at the $m$ observed points.

Fix any $T\ne S$. By Assumption~\ref{ass:balanced}, $\Pr[\varphi_T(\alpha_i)=\varphi_S(\alpha_i)]=\frac12$
for every evaluation point $\alpha_i$, and since the observations are independent,
\[
\Pr\bigl[\varphi_T(\alpha_i)=\varphi_S(\alpha_i)\ \text{for all}\ i=1,\ldots,m\bigr] = 2^{-m}.
\]
There are at most $2^n-1$ identities $T\ne S$. Applying the union bound over all such $T$,
\[
\Pr[\mathrm{Failure}] \;\le\; (2^n-1)\,2^{-m} \;<\; 2^{n-m}.
\]
Substituting $m=n+\lambda$ yields $\Pr[\mathrm{Failure}]<2^{-\lambda}$, which completes the
proof.
\hfill$\Box$
\end{proof}
The watermark construction requires a collection of evaluation points
$\alpha_1,\ldots,\alpha_m
\in
\mathbb F_q.$
Since every evaluation produces an independent binary observation,
the evaluation points should be distinct.

Throughout the paper they are generated pseudorandomly from a secret key
using a cryptographic hash function
\[
\alpha_i
=
H(K,\tau_i)\bmod q,
\]
where $\tau_i$ denotes the identifier associated with the current token
pair and $K$ is known only to the watermark detector.
The probability that two independently generated evaluation points
coincide follows directly from the birthday paradox.
\begin{proposition}
Suppose $m$ evaluation points are sampled independently and uniformly from
$\mathbb F_q$. Then
\[
\Pr[\mathrm{collision}]
<
\frac{m(m-1)}{2q}.
\]
\end{proposition}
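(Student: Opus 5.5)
The plan is a union bound over all unordered pairs of indices, the same device used in the proof of Theorem~\ref{thm:main} but applied to the evaluation points themselves. For $1\le i<j\le m$ let $E_{ij}$ be the event $\alpha_i=\alpha_j$. A collision is then the event $\bigcup_{i<j}E_{ij}$, and subadditivity gives
\[
\Pr[\mathrm{collision}]\le\sum_{1\le i<j\le m}\Pr[E_{ij}].
\]

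Next I would compute $\Pr[E_{ij}]$ exactly. Since $\alpha_i$ and $\alpha_j$ are independent and uniform on $\mathbb F_q$, conditioning on $\alpha_i$ gives
\[
\Pr[E_{ij}]=\sum_{a\in\mathbb F_q}\Pr[\alpha_i=a]\Pr[\alpha_j=a]=q\cdot\frac1{q^2}=\frac1q .
\]
There are $\binom m2=\tfrac{m(m-1)}{2}$ pairs, so $\Pr[\mathrm{collision}]\le \tfrac{m(m-1)}{2q}$.

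The only real obstacle is the \emph{strict} inequality in the statement. I would handle it in one of two ways.

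The first is to appeal to the exact formula
\[
\Pr[\text{no collision}]=\prod_{k=1}^{m-1}\Bigl(1-\frac kq\Bigr).
\]
The union bound is strict once $m\ge3$, because the events $E_{12}$ and $E_{13}$ intersect with positive probability $1/q^2$, so inclusion--exclusion loses a positive term. For $m\le1$ the bound $m(m-1)/(2q)$ is $0$, and for $m=2$ the probability equals $1/q$ exactly. In these degenerate cases the strict inequality must be read as the non-strict one.

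The second is simply to note that the strict form holds for $m\ge3$ and state it that way. Either way I would remark that for $m\ge2$ the bound is used only as $\le$ in later estimates. The surrounding context, which takes $m$ on the order of $n$ and $q>2^n$, makes the probability negligible anyway.
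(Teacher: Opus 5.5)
Your argument is the paper's: a union bound over the $\binom{m}{2}$ pairs, each of which collides with probability $1/q$. Your treatment of the strict inequality is more careful than the paper's proof, which ends with $\binom{m}{2}\frac1q<\frac{m(m-1)}{2q}$ even though the two sides are equal; strictness really does need your observation that $\Pr[E_{12}\cap E_{13}]=1/q^2>0$ for $m\ge3$, and you are right that the strict bound is false for $m\le2$.
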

\begin{proof}
For every pair of sampled evaluation points,
$\Pr[\alpha_i=\alpha_j]
=
\frac1q.$
Applying the union bound over the $\binom{m}{2}$ pairs gives
\[
\Pr[\mathrm{collision}]
\le
\binom{m}{2}\frac1q
<
\frac{m(m-1)}{2q}.
\]
\hfill$\Box$
\end{proof}

The preceding analysis establishes that the proposed fingerprinting
mechanism possesses two desirable properties.
First, every observed token pair contributes one statistically
independent binary constraint on the embedded identity.
Consequently, approximately one additional observation eliminates one
additional bit of uncertainty.
Second, the recovery probability depends only on the number of surviving
observations and not on their positions within the generated text.
This property distinguishes the proposed construction from existing
block-based watermarking schemes, whose recovery depends on maintaining
synchronization between the embedding and extraction processes.
The algebraic fingerprint introduced in this section forms the
mathematical foundation of the watermarking scheme described in the next
section.

\section{Watermark Construction}
\label{sec:construction}
The algebraic framework developed in the previous section associates
every identity with a binary fingerprint function
$\phi_S:\mathbb F_q\rightarrow\{0,1\}.$
The watermarking scheme embeds evaluations of this function into the
generated text. Unlike existing synchronization-dependent watermarking
schemes, each embedded watermark bit is completely independent of all
previously generated bits. Consequently, watermark recovery depends only
on the number of surviving observations rather than their positions in
the text.

The construction uses the following public parameters.
\begin{itemize}
\item
An odd prime
$q>2^n.$
\item
A cryptographic hash function
$H:\{0,1\}^*\rightarrow\mathbb F_q.$
\item
A language model capable of generating two disjoint token classes
corresponding to binary values $0$ and $1$.
\end{itemize}
The detector additionally possesses the secret key $K$,
which determines the sequence of evaluation points.

Let $S=(s_0,\ldots,s_{n-1})$ be the identity to be embedded.
Construct the polynomial
\[
f_S(x)
=
s_0+s_1x+\cdots+s_{n-1}x^{n-1}
\in\mathbb F_q[x].
\]
The associated binary fingerprint is
$\phi_S(\alpha)
=
f_S(\alpha)\bmod2.$
No further preprocessing of the identity is required.

Watermark embedding proceeds independently for every consecutive token
pair $P_i=(t_i,t_{i+1}).$ The first token determines the evaluation point, while the second token
carries the watermark bit.
For every pair $P_i$, the sender performs the following operations.
\begin{enumerate}
\item
Compute
$\alpha_i
=
H(K,\mathrm{id}(t_i))
\bmod q.$
\item
Evaluate the fingerprint
$w_i
=
\phi_S(\alpha_i).$
\item
Generate the second token $t_{i+1}$
using any binary watermarking mechanism that biases generation toward
the token class corresponding to $w_i$.
\end{enumerate}
Since every evaluation point depends only on the current token pair,
each watermark bit is generated independently of all previous
observations.


Suppose that, after arbitrary editing, the detector identifies
$r$ surviving token pairs
$P_{i_1},
\ldots,
P_{i_r}.$
For every surviving pair the detector recomputes
$\alpha_j
=
H(K,\mathrm{id}(t_{i_j}))
\bmod q $
and determines the embedded bit
$\hat w_j$
from the token class of the second token.
This produces the observation set
\[
\{
(\alpha_j,\hat w_j)
\}_{j=1}^{r}.
\]
Unlike conventional watermarking schemes, these observations may appear
in arbitrary order.
No synchronization between embedding and extraction is required.

The detector reconstructs the embedded identity by searching for the
unique polynomial $f_S(x)$
whose binary fingerprint agrees with all observed evaluations,
\[
\phi_S(\alpha_j)
=
\hat w_j,
\qquad
j=1,\ldots,r.
\]
Recovery succeeds whenever a unique identity satisfies the observed
constraints.
According to Theorem~\ref{thm:main}, if
$r
\ge
n+\lambda,$
then 
$\Pr[\mathrm{Failure}]
<
2^{-\lambda}.$
Therefore successful watermark recovery depends only on the number of
surviving token pairs and is independent of their original positions in
the generated text.


The proposed construction differs fundamentally from existing
block-based watermarking techniques.
In conventional watermarking schemes, watermark bits are embedded
sequentially.
Insertions or deletions shift subsequent watermark positions,
destroying synchronization between the encoder and decoder.
In contrast, every watermark bit in the proposed construction is bound
only to its own token pair through the cryptographic hash function.
Consequently,
\begin{itemize}

\item insertions create additional token pairs without affecting
previous observations;
\item deletions merely remove individual observations;
\item substitutions affect only the modified token pair;
\item sentence reordering leaves every surviving observation unchanged.
\end{itemize}
The detector therefore operates on an unordered collection of binary
evaluations rather than on a synchronized bit stream.
This synchronization-free property constitutes the principal advantage
of the proposed watermarking framework.

\section{Recovery Algorithms}
\label{sec:recovery}
During watermark extraction the detector observes a collection
$\mathcal O=
\{(\alpha_i,\hat w_i)\}_{i=1}^{r},$
where
$\hat w_i=\phi_S(\alpha_i)
=f_S(\alpha_i)\bmod2$
denotes the recovered fingerprint bit associated with the evaluation
point $\alpha_i$.
The objective of the recovery algorithm is to determine the unique
identity
$S\in\mathcal S$
whose fingerprint agrees with all observed evaluations. Formally, the
detector searches for an identity satisfying
$\phi_S(\alpha_i)=\hat w_i,
i=1,\ldots,r.$
According to Theorem~\ref{thm:main}, if
$r\ge n+\lambda,$
the solution is unique except with probability at most
$2^{-\lambda}$.
Depending on the size of the identity space, different recovery
algorithms become appropriate.
\vspace{3mm}

\noindent
{\bf Algorithm 1: Exhaustive Search}
%
When the identity space is sufficiently small, recovery can be
performed by exhaustive enumeration.
For every candidate identity
$S'
\in
\mathcal S,$
the detector constructs the corresponding polynomial
$f_{S'}(x)$ and evaluates
\[
\phi_{S'}(\alpha_i)
=
f_{S'}(\alpha_i)\bmod2,
\qquad
i=1,\ldots,r.
\]
The resulting fingerprint is compared with the observed watermark bits.
The unique candidate satisfying all observations is returned as the
embedded identity.
\begin{algorithm}[H]
\caption{Exhaustive Recovery}
\label{alg:lookup}
\begin{algorithmic}[1]
\Require Observations
$\mathcal O=\{(\alpha_i,\hat w_i)\}_{i=1}^{r}$

\ForAll{$S'\in\mathcal S$}

    \State Construct $f_{S'}(x)$.

    \State Compute
    $\phi_{S'}(\alpha_i)$
    for every observation.

    \If{all fingerprint bits agree}

        \State Return $S'$.

    \EndIf

\EndFor

\State Return {\sc Fail}.

\end{algorithmic}
\end{algorithm}
\noindent
The complexity of exhaustive recovery is $O(2^n),$
making it practical whenever $n\lesssim20.$
\vspace{3mm}

\noindent
{\bf Algorithm 2: Meet-in-the-Middle Recovery}
For larger identities the exhaustive search becomes infeasible.
A significant reduction in complexity is obtained by partitioning the
identity into halves.
Let $n=n_L+n_R,$ 
and write
$S=(S_L,S_R),$
where
\[
S_L
=
(s_0,\ldots,s_{n_L-1}),
\qquad
S_R
=
(s_{n_L},\ldots,s_{n-1}).
\]
The polynomial representation decomposes naturally as
$f_S(x)
=
f_L(x)
+
x^{n_L}f_R(x).$
The algorithm first enumerates every possible lower half
$S_L,$
computes its contribution to every observed evaluation point, and stores
the resulting vectors in a lookup table.
The second stage enumerates every upper half
$S_R$ and searches for compatible entries in the lookup table.
The search therefore requires approximately
$2^{n/2}$ operations instead of $2^n$.
\begin{algorithm}[H]
\caption{Meet-in-the-Middle Recovery}\label{alg:mitm}
\begin{algorithmic}[1]

\Require
Observed fingerprint
$\mathcal O$

\State Construct a lookup table containing the evaluation vectors of
all lower-half identities.

\ForAll{upper-half identities}

    \State Compute the corresponding evaluation vector.

    \State Search the lookup table for a compatible lower-half.

    \If{a unique match exists}

        \State Return the reconstructed identity.

    \EndIf

\EndFor

\State Return {\sc Fail}.

\end{algorithmic}
\end{algorithm}
Ignoring the cost of hash-table operations, the computational
complexity becomes
$O(2^{n/2}),$
while the memory requirement is
$O(2^{n/2}).$
Consequently, the meet-in-the-middle algorithm extends practical
recovery to substantially larger identity spaces.

{\small
\begin{example}
We illustrate the meet-in-middle recovery algorithm
(Algorithm~\ref{alg:mitm}) with parameters $n=4$, $q=17$, split
position $k=2$, true identity $S=6=(s_0,s_1,s_2,s_3)=(0,1,1,0)$,
so that
$f_6(x) = x + x^2 \in \mathbb F_{17}[x],$
and evaluation points $\alpha_1=2$, $\alpha_2=3$, $\alpha_3=5$,
$\alpha_4=7$.

\noindent
\underline{\it Observed bits.}
Each observed bit is $w_i = \bigl(f_6(\alpha_i) \bmod 17\bigr) \bmod 2$.
\begin{center}
\begin{tabular}{cccc}
\hline
$i$ & $\alpha_i$ & $f_6(\alpha_i) \bmod 17$ & $w_i$ \\
\hline
$1$ & $2$ & $6$  & $0$ \\
$2$ & $3$ & $12$ & $0$ \\
$3$ & $5$ & $13$ & $1$ \\
$4$ & $7$ & $5$  & $1$ \\
\hline
\end{tabular}
\end{center}
This gives the observed bit vector $\hat{\mathbf w} = (0,0,1,1)$.

\noindent
\underline{\it Splitting the identity.}
Write the left and right halves of $S=(s_0,s_1,s_2,s_3)$ as the
integers
\[
S_L = s_0 + 2 s_1 = 2,
\qquad
S_R = s_2 + 2 s_3 = 1,
\]
so that the full identity is recovered as $S = S_L + 2^{k} S_R$.
Correspondingly, $f_S(x)$ splits as
\[
f_S(x) = \underbrace{(s_0 + s_1 x)}_{h_{S_L}(x)}
       \;+\; x^{k}\underbrace{(s_2 + s_3 x)}_{\textstyle =: g_{S_R}(x)},
\]
so $h_{S_L}$ depends only on the left half $(s_0,s_1)$ and
$g_{S_R}$ only on the right half $(s_2,s_3)$. For the true identity,
$h_2(x) = x$ and $g_1(x) = x^2$.

\noindent
\underline{\it Baby step: precompute the left-half table.}
For every candidate $S_L \in \{0,1,2,3\}$, precompute
$\mathbf r_L(S_L) =
\bigl(h_{S_L}(\alpha_i) \bmod q\bigr)_{i=1}^{4}
\in \{0,\ldots,q-1\}^{4}.$
This table depends only on $S_L$ and is built once, independently
of $S_R$.
\begin{center}
\begin{tabular}{cccc}
\hline
$S_L$ & $(s_0,s_1)$ & $h_{S_L}(x)$ & $\mathbf r_L(S_L)$ \\
\hline
$0$ & $(0,0)$ & $0$   & $(0,0,0,0)$ \\
$1$ & $(1,0)$ & $1$   & $(1,1,1,1)$ \\
$2$ & $(0,1)$ & $x$   & $(2,3,5,7)$ \\
$3$ & $(1,1)$ & $1+x$ & $(3,4,6,8)$ \\
\hline
\end{tabular}
\end{center}

\noindent
\underline{\it Giant step: search over the right half.}
For every candidate $S_R \in \{0,1,2,3\}$, compute
$\mathbf r_R(S_R) =
\bigl(g_{S_R}(\alpha_i) \bmod q\bigr)_{i=1}^{4}
\in \{0,\ldots,q-1\}^{4},$
then scan the baby-step table for an $S_L$ such that
\[
\bigl(r_L^{(i)}(S_L) + r_R^{(i)}(S_R)\bigr) \bmod q
\ \equiv\ w_i \pmod 2
\qquad\text{for every } i=1,\ldots,4.
\]
In practice the constraints are checked in order
$i=1,2,3,4$, and a candidate $S_L$ is discarded as soon as a
single constraint fails; this early termination is what makes the
search efficient in general.

\begin{center}
\renewcommand{\arraystretch}{1.2}
\begin{tabular}{cccc}
\hline
$S_R$ & $(s_2,s_3)$ & $\mathbf r_R(S_R)$ & Result \\
\hline
$0$ & $(0,0)$ & $(0,0,0,0)$  & no $S_L$ satisfies all four constraints \\
$1$ & $(1,0)$ & $(4,9,8,15)$ & $S_L = 2$ satisfies all four constraints \checkmark \\
$2$ & $(0,1)$ & $(8,10,6,3)$ & no $S_L$ satisfies all four constraints \\
$3$ & $(1,1)$ & $(12,2,14,1)$& no $S_L$ satisfies all four constraints \\
\hline
\end{tabular}
\end{center}

\noindent
\underline{\it Verifying the unique match $S_L=2$, $S_R=1$.}
\begin{center}
\renewcommand{\arraystretch}{1.2}
\begin{tabular}{ccccccc}
\hline
$i$ & $r_L^{(i)}(2)$ & $r_R^{(i)}(1)$ & sum & sum $\bmod 17$ & parity & $w_i$ \\
\hline
$1$ & $2$ & $4$  & $6$  & $6$  & $0$ & $0$ \checkmark \\
$2$ & $3$ & $9$  & $12$ & $12$ & $0$ & $0$ \checkmark \\
$3$ & $5$ & $8$  & $13$ & $13$ & $1$ & $1$ \checkmark \\
$4$ & $7$ & $15$ & $22$ & $5$  & $1$ & $1$ \checkmark \\
\hline
\end{tabular}
\end{center}
Note the wraparound at $i=4$: although $r_L^{(4)}=7$ and
$r_R^{(4)}=15$ are individually odd, their sum $22$ exceeds $q=17$,
so $22 \bmod 17 = 5$ is \emph{odd}, not even. Combining the two
parities directly (e.g.\ via XOR) before reducing modulo $q$ would
therefore give the wrong bit; the reduction mod $q$ must be applied
\emph{before} taking parity.
Since $(S_L,S_R)=(2,1)$ is the unique pair satisfying all four
constraints, the identity is recovered as
$\hat S = S_L + 2^{k} S_R = 2 + 4\cdot 1 = 6,$
correctly matching the true identity $S=6$. \checkmark
\end{example}
}
\begin{table}[ht]
\centering
\caption{Comparison of recovery algorithms.}
\label{tab:recovery-comparison}
\begin{tabular}{lccc}
\hline
Algorithm & Time & Memory & Typical range \\
\hline
Exhaustive search      & $O(2^n)$     & $O(1)$        & $n\le20$ \\
Meet-in-the-middle     & $O(2^{n/2})$ & $O(2^{n/2})$  & $20<n\le60$ \\
\hline
\end{tabular}
\end{table}

For $n=60$ the baby-step table holds $2^{30}\approx1.07\times10^9$ entries, requiring
roughly $9$--$17$ GB of memory depending on encoding --- comfortably within reach of a
single well-specified workstation. The method remains correct up to $n\approx64$, but
at that point the table grows to $2^{32}\approx4.3\times10^9$ entries
($\sim35$--$70$ GB), which requires a dedicated high-memory machine rather than typical
hardware; we therefore report $n\le60$ as the practical range and treat larger $n$ up to
$64$ as an achievable but hardware-demanding edge case.

For substantially larger identity spaces, the recovery problem may also
be formulated as a system of Boolean constraints or as an instance of
the Closest Vector Problem in an appropriately constructed lattice.
Although these formulations appear promising, a detailed complexity
analysis is beyond the scope of the present work and is therefore left
for future research.

\section{Security Analysis}
\label{sec:security}

The security of the proposed watermarking scheme relies on two
independent principles. First, the embedded identity must be recovered
uniquely from the observed watermark. Second, the watermark must remain
recoverable after the generated text has undergone editing operations
such as insertions, deletions or substitutions.
Unlike existing synchronization-dependent watermarking schemes, the
proposed construction embeds each fingerprint bit independently.
Consequently, the combined effect of all error mechanisms can be
modelled by a Binary Symmetric Channel (BSC), which enables a simple
probabilistic analysis of watermark recovery.


Suppose that the detector observes $r$ watermark bits generated from an
embedded identity $S\in\mathcal S$.
According to Theorem~\ref{thm:main}, if $r\ge n+\lambda$, the probability
that another identity satisfies all observed fingerprint constraints is
at most $2^{-\lambda}$. Consequently, the probability of falsely
attributing a document to another registered identity decreases
exponentially with the number of surviving watermark observations.


The proposed watermarking scheme assumes that every token pair embeds one
binary fingerprint bit
$w=\phi_S(\alpha)=f_S(\alpha)\bmod2,$
where the evaluation point $\alpha$ is derived from the preceding token
using the secret-keyed hash function.
During embedding and subsequent editing of the generated text, individual
fingerprint bits may become corrupted. We model the combined effect of
all error mechanisms by a Binary Symmetric Channel (BSC) with crossover
probability $p$. Each watermark bit is therefore recovered correctly
with probability $1-p$ and incorrectly with probability $p$,
independently of all remaining bits.
This model is particularly well suited to the proposed construction
because every fingerprint observation is statistically independent of
every other observation.

\subsection{Why the BSC Model Applies}

Most existing LLM watermarking schemes embed watermark bits in
synchronized blocks. A single insertion or deletion usually destroys the
alignment of the remaining watermark bits, causing an entire block to be
decoded incorrectly.
The proposed scheme behaves fundamentally differently. Every watermark
bit depends only on one token pair through the evaluation point
$\alpha=H(K,\mathrm{id}(t))$. Consequently,
\begin{itemize}
\item each token pair contributes one independent fingerprint bit;
\item editing operations affect only local observations;
\item watermark recovery depends only on the number of surviving
observations and not on their positions within the document.
\end{itemize}
Therefore all editing operations may be represented by a single channel
parameter $p$ rather than analysed individually.
The effective crossover probability $p$ reflects both imperfections of
the language model and adversarial editing of the generated text.

\paragraph{Embedding errors.}
Despite applying a logit bias, the language model may occasionally
generate a token from the wrong colour class, producing one erroneous
fingerprint bit.
\paragraph{Insertion.}
Suppose an adversary inserts a token $t_f$ between two consecutive tokens
$t_i$ and $t_{i+1}$, replacing one watermarked pair by the two pairs
$(t_i,t_f)$ and $(t_f,t_{i+1})$. For the insertion to remain completely
undetected, two independent events must occur simultaneously: (i) the
inserted token must belong to the required colour class, and (ii) its
hash-derived evaluation point must produce the expected fingerprint bit
for the subsequent token. Each event occurs with probability
approximately $1/2$, yielding
$\Pr[\text{undetected insertion}]
\approx
\frac14.$
Hence an insertion is detected with probability approximately
$3/4$. Equivalently, among $u$ random insertions, only about $u/4$
remain invisible to the detector, while approximately $3u/4$ introduce
at least one detectable watermark inconsistency.

\paragraph{Deletion.}

Suppose an adversary deletes the token $t_{i+1}$ from the sequence
$(t_i,t_{i+1},t_{i+2})$. The original watermark observations associated
with the pairs $(t_i,t_{i+1})$ and $(t_{i+1},t_{i+2})$ disappear and are
replaced by the single pair $(t_i,t_{i+2})$.
For the deletion to remain completely undetected, the newly formed pair
must simultaneously satisfy two independent conditions. First,
$t_{i+2}$ must belong to the colour class prescribed by the watermark
bit associated with the new evaluation point generated from $t_i$.
Second, the newly generated evaluation point must produce the correct
fingerprint relation for the surviving token pair. Under the random
oracle assumption, each event occurs with probability approximately
$1/2$. Hence
$\Pr[\text{undetected deletion}]
\approx
\frac14,$
and therefore
$\Pr[\text{detected deletion}]
\approx
\frac34.$
Consequently, among $u$ random deletions, approximately $u/4$ remain
undetected, whereas about $3u/4$ introduce at least one detectable
watermark inconsistency.


\paragraph{Substitution.}

Suppose an adversary replaces a token $t_i$ by another token $t'_i$.
This modification changes both the colour class of the substituted token
and the evaluation point used to generate the subsequent fingerprint
constraint.
For the substitution to remain undetected, two independent events must
occur simultaneously. The replacement token must belong to the expected
colour class, and its hash-derived evaluation point must generate the
correct fingerprint relation for the following token. Assuming both
events occur independently with probability approximately $1/2$, we
obtain
$\Pr[\text{undetected substitution}]
\approx
\frac14,$
and therefore
$\Pr[\text{detected substitution}]
\approx
\frac34.$
Thus, among $u$ random substitutions, approximately $u/4$ remain
undetected, while about $3u/4$ produce a detectable violation of the
embedded watermark.

Paraphrasing, deletion of contiguous text fragments, and insertion of multiple consecutive tokens can all be decomposed into sequences of the elementary editing operations analysed above. Consequently, their effect on the watermark can be modelled by composing the corresponding insertion, deletion, and substitution probabilities, yielding an effective crossover probability for the resulting Binary Symmetric Channel.

\subsection{Probability of Successful Recovery}

Suppose that the watermark is embedded into $N$ token pairs and let
$X\sim\mathrm{Binomial}(N,1-p)$
denote the number of correctly recovered fingerprint bits.
Since the proposed Reed--Solomon construction reconstructs the embedded
identity from any collection of at least $n$ correct observations,
successful recovery occurs whenever $X\ge n$. Thus,
\[
\Pr[\mathrm{Recovery}]
=
\Pr[X\ge n].
\]
For a desired recovery probability $\gamma$, the minimum watermark length
is therefore the smallest integer $N$ satisfying
$\Pr[X\ge n]\ge\gamma.$
For moderate values of $N$, the binomial distribution may be accurately
approximated by a normal distribution, yielding
\[
N^*
\approx
\left\lceil
\frac{1}{1-p}
\left(
\frac{
z_\gamma\sqrt{p(1-p)}
+
\sqrt{
z_\gamma^2p(1-p)+4n(1-p)
}
}{2}
\right)^2
\right\rceil,
\]
where $z_\gamma$ denotes the standard normal quantile associated with
probability $\gamma$.

{\small
\begin{example}

For a $32$-bit identity and target recovery probability
$\gamma=0.99$, Table~\ref{tab:bsc} compares the exact binomial solution
with the normal approximation.
\begin{table}[ht]
\centering
\caption{Minimum number of transmitted token pairs required for reliable
recovery of a $32$-bit identity.}
\label{tab:bsc}
\begin{tabular}{ccccc}
\hline
$p$ & Exact $N^*$ & Approx. $N^*$ & Overhead & Success probability\\
&&&$N^*-n$&   \\
\hline
0.001 & 33 & 33 & 1  & 0.9995\\
0.005 & 34 & 34 & 2  & 0.9993\\
0.010 & 34 & 34 & 2  & 0.9953\\
0.050 & 37 & 37 & 5  & 0.9905\\
0.100 & 41 & 41 & 9  & 0.9939\\
0.200 & 48 & 49 &16  & 0.9907\\
0.300 & 57 & 58 &25  & 0.9908\\
\hline
\end{tabular}
\end{table}
The results demonstrate that the redundancy required by the proposed
scheme is remarkably small. Even for an extremely noisy channel with
crossover probability $p=0.30$, only $25$ additional token pairs are
required to recover a $32$-bit identity with probability at least
$99\%$. Under more realistic editing conditions ($p \leq 0.10$), fewer than
ten additional token pairs are sufficient. Consequently, reliable
watermark recovery can typically be achieved from only a few sentences of
generated text.
\end{example}
}

The Binary Symmetric Channel representation constitutes one of the main
advantages of the proposed watermarking framework. Because every
fingerprint bit is generated independently, synchronization errors never
propagate through the watermark. Instead, insertions, deletions,
substitutions and embedding imperfections merely increase the effective
crossover probability $p$.
Consequently, the complete robustness analysis reduces to a classical
coding-theoretic problem whose behaviour is fully characterised by the
binomial distribution. This considerably simplifies both the analysis
and the practical design of the watermarking system while providing
explicit guarantees on the probability of successful watermark recovery.

\section{Scaling to Long Identities by Fragmentation}
\label{sec:fragmentation}

The recovery algorithms presented in Section~\ref{sec:recovery} become
progressively more expensive as the identity length increases.
To overcome this limitation, we partition the secret identity into
several shorter fragments and recover each fragment independently.
Consequently, one large recovery problem is replaced by a collection of
small independent recovery problems, while preserving all desirable
properties of the proposed watermarking scheme, including
synchronization-free embedding and robustness against editing.
Suppose that the secret identity consists of $n=mv$ bits and is divided
into $v=2^r$ fragments
$S^{(1)},S^{(2)},\ldots,S^{(v)}$, each containing $m$ bits. Every
fragment is encoded independently using the Reed--Solomon construction
described in Section~\ref{sec:construction}.


For every token pair $(t_i,t_{i+1})$, the keyed hash
$H(K,\mathrm{id}(t_i))$ is interpreted as two independent random
variables. The first $r=\log_2v$ bits determine the fragment index
$j(i)\in\{1,\ldots,v\}$, while the remaining bits generate the
evaluation point $\alpha_i\in\mathbb F_q$ for that fragment. The
embedded watermark bit is therefore
$
w_i=f_{S^{(j(i))}}(\alpha_i)\bmod2,
$
where $f_{S^{(j)}}$ denotes the Reed--Solomon polynomial representing
fragment $S^{(j)}$.
Since both the fragment index and the evaluation point are generated
independently from the keyed hash function, every watermark observation
remains independent of all other observations. Consequently, the
fragmented construction remains synchronization-free.

\paragraph{Minimum Transmission Length}
%
Assume that each embedded watermark bit is transmitted through a Binary
Symmetric Channel with crossover probability $p$.
For a fixed fragment, each transmitted token pair contributes one
correct watermark observation with probability
$
q=\frac{1-p}{v},
$
since the corresponding fragment is selected with probability $1/v$ and
its watermark bit is recovered correctly with probability $1-p$.

Let $X_j$ denote the number of correctly recovered observations for the
$j$-th fragment. The random variables
$X_1,\ldots,X_v$ follow a multinomial distribution. Since $q$ is small,
we approximate each $X_j$ by an independent random variable
$
X_j\sim\mathrm{Binomial}(N,q).
$
Recovery of fragment $j$ succeeds whenever $X_j\ge m$.
Consequently, successful recovery of the complete identity requires
$
X_j\ge m,  j=1,\ldots,v.
$
Using the independence approximation,
$
\Pr[\text{Recovery}]
\approx
\Pr[X_1\ge m]^v.
$
For a prescribed recovery probability $\gamma$, the required
per-fragment success probability is therefore
$
\gamma_0=\gamma^{1/v},
$
and the minimum transmission length $N^*$ is the smallest integer
satisfying
$
\Pr[\mathrm{Binomial}(N,q)\ge m]\ge\gamma_0.
$
Since $q=(1-p)/v$ is typically small, the Binomial distribution is well
approximated by a Poisson distribution with mean
$
\lambda=Nq.
$
Let $\lambda^*$ denote the smallest value satisfying
\[
1-F_{\mathrm{Pois}(\lambda)}(m-1)\ge\gamma_0,
\]
where $F_{\mathrm{Pois}(\lambda)}$ is the cumulative Poisson
distribution.
The required transmission length is therefore approximated by
$
N^*
\approx
\left\lceil
\frac{v\lambda^*}{1-p}
\right\rceil.
$
This expression shows that the required text length grows linearly with
the number of fragments and inversely with the channel reliability
$1-p$.

{\small
\begin{example}
Table~\ref{tab:multipart} considers a $128$-bit identity divided into
$v=32$ fragments of $m=4$ bits each. The target recovery probability is
$\gamma=0.99$, giving
$\gamma_0\approx0.999686$ and
$\lambda^*\approx14.52$.
Table~\ref{tab:multipart-64} considers a $64$-bit identity divided into
$v=8$ fragments of $m=8$ bits each for target recovery probabilities
$\gamma=0.99$ and $\gamma=0.90$.
\begin{table}[ht]
\centering
\caption{Minimum transmission length $N^*$ for $n=128$, $m=4$, $v=32$,
$\gamma=0.99$.}
\label{tab:multipart}
\renewcommand{\arraystretch}{1.15}
\begin{tabular}{cccccc}
\hline
$p$ & $q=(1-p)/v$ & Exact $N^*$ & Poisson $N^*$ & Overhead & Success \\
&&&&$N^*-vm$& probability\\
\hline
0.001 & 0.031219 & 460 & 466 & 332 & 0.9902\\
0.005 & 0.031094 & 461 & 467 & 333 & 0.9900\\
0.010 & 0.030937 & 464 & 470 & 336 & 0.9902\\
0.050 & 0.029687 & 484 & 490 & 356 & 0.9902\\
0.100 & 0.028125 & 511 & 517 & 383 & 0.9902\\
0.200 & 0.025000 & 575 & 581 & 447 & 0.9901\\
0.300 & 0.021875 & 658 & 664 & 530 & 0.9901\\
\hline
\end{tabular}
\end{table}
\begin{table}[ht]
\centering
\caption{Minimum transmission length $N^*$ for $n=64$, $m=8$, $v=8$.}
\label{tab:multipart-64}
\renewcommand{\arraystretch}{1.15}
\begin{tabular}{ccrrrrrr}
\hline
&&\multicolumn{3}{c}{$\gamma=0.99$}&
\multicolumn{3}{c}{$\gamma=0.90$}\\
\cline{3-5}\cline{6-8}
$p$ & $q$
& Exact
& Poisson
& Overhead
& Exact
& Poisson
& Overhead\\
&&&&$N^*-vm$&&&$N^*-vm$\\
\hline
0.001 & 0.12488 & 149 & 155 & 85 & 121 & 125 & 57\\
0.005 & 0.12438 & 149 & 156 & 85 & 121 & 126 & 57\\
0.010 & 0.12375 & 150 & 156 & 86 & 122 & 126 & 58\\
0.050 & 0.11875 & 157 & 163 & 93 & 127 & 131 & 63\\
0.100 & 0.11250 & 166 & 172 & 102 & 134 & 139 & 70\\
0.200 & 0.10000 & 187 & 193 & 123 & 152 & 156 & 88\\
0.300 & 0.08750 & 215 & 221 & 151 & 174 & 178 & 110\\
\hline
\end{tabular}
\end{table}
\end{example}
}

The results demonstrate that fragmentation enables watermarking of
arbitrarily long identities while preserving synchronization-free
embedding. As predicted by
$
N^*\approx v\lambda^*/(1-p),
$
the required transmission length grows approximately linearly with the
number of fragments and inversely with the probability of correct
watermark recovery.
For small fragment sizes, the Poisson approximation closely matches the
exact Binomial solution over the entire range of channel error
probabilities considered. In particular, for $m=4$ the approximation
differs from the exact solution by at most six transmitted token pairs,
while for $m=8$ the difference never exceeds seven token pairs.
Consequently, the Poisson approximation provides a simple and accurate
design rule for determining the minimum document length required for
reliable watermark recovery.

\begin{remark}
The independence approximation
$
\Pr[\min_jX_j\ge m]\approx\Pr[X_1\ge m]^v
$
is slightly conservative because the random variables
$X_1,\ldots,X_v$ are negatively correlated through the fixed total
number of transmitted token pairs. Consequently, the true probability of
successful recovery is typically slightly larger than that reported in
Tables~\ref{tab:multipart} and
\ref{tab:multipart-64}.
\end{remark}

\section{Evaluation Methodology}
\label{sec:evaluation}

Evaluation of LLM watermarking schemes is traditionally based on
large-scale experiments involving several language models, benchmark
datasets and collections of editing attacks. Such experiments are
important because they demonstrate that a watermarking method is
implementable and effective under practical conditions. However, purely
experimental evaluation also suffers from several fundamental
limitations.

First, the outcome inevitably depends on the particular LLMs selected for
the experiments. Since new language models and improved versions are
released continuously, it is impossible to evaluate every relevant model.
Consequently, experimental results often become outdated within a short
period of time.

Second, the performance of a watermarking scheme depends strongly on the
underlying text distribution. Scientific papers, source code, news
articles, legal documents and conversational text exhibit very different
statistical properties, resulting in different watermarking behaviour.
Any experimental study therefore reflects only the particular corpora that
were chosen.

Third, the robustness of a watermark depends on numerous implementation
parameters, including the logit bias used during embedding, the decoding
strategy of the language model, and the types of adversarial editing
performed after generation. Exploring all combinations of these
parameters rapidly becomes computationally infeasible.

For these reasons, we argue that experimental evaluation should be
complemented by mathematical modelling. Instead of analysing each LLM,
each corpus and each attack separately, it is desirable to replace their
combined effect by a statistical communication channel whose parameters
capture the overall probability of watermark corruption.

In the proposed scheme this abstraction is particularly natural.
Each embedded watermark bit is recovered independently of all other
watermark bits, and every source of corruption simply changes the value of
that bit with some probability. Consequently, the complete watermarking
process can be modelled by a Binary Symmetric Channel (BSC) with crossover
probability $p$. The parameter $p$ incorporates the combined influence of

\begin{itemize}
\item imperfect watermark embedding caused by the language model,
\item the choice of logit bias and decoding algorithm,
\item statistical properties of the generated text,
\item token insertions, deletions and substitutions,
\item paraphrasing and other editing operations.
\end{itemize}

Once the effective crossover probability $p$ has been estimated,
performance analysis becomes independent of the particular language model
or text corpus. Quantities such as the probability of successful
recovery, the required text length, redundancy, and resistance to
editing attacks can all be derived analytically. This provides guarantees
that remain valid for every watermarking system exhibiting the same
effective error probability.

The BSC model should therefore be viewed as an abstraction rather than a
description of a particular LLM. Different language models, embedding
algorithms and editing strategies may produce different values of $p$, but
their watermarking performance can nevertheless be compared within the
same mathematical framework. This significantly reduces the need for
repeated large-scale experiments whenever a new LLM or decoding algorithm
appears.

Naturally, the Binary Symmetric Channel is only the simplest member of a
hierarchy of increasingly realistic statistical models. Future work may
consider channels with memory, such as the Gilbert--Elliott
model~\cite{Gilbert1960,Elliott1963,Mushkin1989}, Hidden Markov
Models~\cite{Rabiner1989}, or explicit insertion/deletion
channels~\cite{Mitzenmacher2009,Morozov2024}. Such models may capture
burst errors or context-dependent editing more accurately. Nevertheless,
the BSC already provides a remarkably useful approximation for analysing
synchronisation-free watermarking schemes, because all editing operations
ultimately manifest themselves as independent errors in the recovered
parity bits.

We believe that developing statistical channel models for LLM
watermarking represents an important research direction. Such models would
provide a principled methodology for comparing watermarking schemes,
predicting their robustness, and analysing new generations of language
models without requiring exhaustive experimental evaluation for every
individual LLM.

\section{Extension to Diffusion Language Models}
\label{sec:diffusion}

Diffusion Language Models (DLMs) constitute an alternative paradigm for
text generation. Unlike autoregressive language models, which generate
tokens sequentially from left to right, diffusion models begin with a
corrupted (or masked) sequence and iteratively refine it until a coherent
sentence is obtained. During each refinement step, a subset of tokens is
updated while the remaining tokens remain unchanged. The process
terminates once the generated sequence becomes stable.
The synchronization-free watermark proposed in this paper naturally
extends to diffusion models. This is because every watermark bit depends
only on a pair of neighbouring tokens and not on the global history of
generation. Consequently, watermark constraints can be verified and
enforced locally throughout the diffusion process.

Let
$
T=(t_1,t_2,\ldots,t_N)
$
denote the current sequence of tokens produced by the diffusion model.
As in the autoregressive construction, the secret identity
$
S\in\mathcal S
$
is represented by the Reed--Solomon polynomial
$
f_S(x)=
s_0+s_1x+\cdots+s_{m-1}x^{m-1},
$
where $m$ denotes the identity length.
For every neighbouring pair $(t_i,t_{i+1})$ we compute
$
\alpha_i
=
H(K,\mathrm{id}(t_i))
\bmod q,
$
where $H$ is a cryptographic hash function and $K$ is the secret
watermarking key.
The corresponding watermark bit equals
$
w_i
=
f_S(\alpha_i)\bmod2.
$
Exactly as in the autoregressive case, the vocabulary is partitioned into
two colour classes,
$V_0$
and
$V_1,$
and token $t_{i+1}$ is said to satisfy the watermark constraint whenever
$
t_{i+1}\in V_{w_i}.
$
Since every watermark constraint depends only upon two consecutive tokens,
the watermarking problem becomes a collection of local consistency
conditions.


Every interior token participates simultaneously in two neighbouring
constraints.
\begin{description}
\item[\rm Left constraint.]
The colour of token $t_i$ must agree with the watermark bit determined by
its left neighbour,
$
t_i
\in
V_{\,f_S(H(K,\mathrm{id}(t_{i-1})))\bmod2}.
$
\item[\rm Right constraint.]
The colour of token $t_{i+1}$ must agree with the watermark bit generated
from the current token,
$
t_{i+1}
\in
V_{\,f_S(H(K,\mathrm{id}(t_i)))\bmod2}.
$
\end{description}
An interior token is therefore classified into one of four states.

\begin{center}
\begin{tabular}{|c|c|c|}
\hline
Left constraint & Right constraint & State\\
\hline
Satisfied & Satisfied & Fully consistent\\
Satisfied & Violated  & Left consistent\\
Violated  & Satisfied & Right consistent\\
Violated  & Violated  & Inconsistent\\
\hline
\end{tabular}
\end{center}
Only fully consistent tokens are guaranteed to remain unchanged throughout
subsequent refinement iterations.



The probabilistic analysis developed in the remainder of this section is
based on the following assumptions.

\fbox{%
\parbox{0.9\linewidth}{%
\begin{assumption}
\label{ass:probability}
The colour selected for each generated token is treated as an
independent random variable.
The probability that the \emph{left constraint} is satisfied is denoted
by $\varepsilon$, where
$
\frac12 \le \varepsilon \le 1.
$
The parameter $\varepsilon$ models the effectiveness of the watermark
embedding mechanism. Modern LLM watermarking algorithms bias the logits
towards the desired colour class before sampling the next token.
Consequently, whenever both colour classes contain sufficiently many
candidate tokens, the probability of selecting a token from the desired
colour class can be made arbitrarily close to one. On the other hand,
there are rare situations in which the language model offers no suitable
token of the required colour, making successful embedding impossible.
For this reason, $\varepsilon$ should be regarded as an average success
probability over all generated tokens.
Throughout the paper we analyse the algorithms for an arbitrary
$\varepsilon$, while the conservative choice
$$
\varepsilon=\frac12
$$
corresponds to unbiased sampling and therefore represents the worst
practical case.
The \emph{right constraint} depends only on the cryptographic hash
function. Under the standard assumption that the hash behaves as a
pseudorandom function, its output is uniformly distributed, giving
$$
\Pr[\text{right constraint holds}]
=\frac12.
$$
Furthermore, the left and right constraints are assumed to be
independent.
\end{assumption}
}}

\vspace{3mm}
It immediately follows that
\[
\Pr[\text{both constraints hold}]
=
\frac{\varepsilon}{2},
\mbox{  and  }
\Pr[\text{neither constraint holds}]
=
\frac{1-\varepsilon}{2}.
\]
These probabilities are independent of the identity length, the
particular language model, and the generated text. Their only dependence
is through the embedding efficiency parameter $\varepsilon$ and the
standard pseudorandomness assumption on the cryptographic hash function.


The diffusion process gradually transforms the sequence into one
satisfying all watermark constraints. To formalise this evolution, we
distinguish two classes of tokens.
\begin{description}
\item[\rm Committed token.]
A token is called \emph{committed} if both neighbouring constraints are
simultaneously satisfied. Once committed, the token is regarded as fixed
and is no longer modified by the watermarking algorithm.
\item[\rm Uncommitted token.]
A token is called \emph{uncommitted} whenever at least one neighbouring
constraint is violated. Such tokens remain eligible for further
refinement.
\end{description}
Let
$
U_k
$
denote the number of uncommitted tokens after the $k$-th diffusion
iteration.
The principal objective of watermark embedding is therefore to drive
$
U_k
\longrightarrow
0,
$
while preserving the semantic quality and fluency of the generated text.
The following subsections describe two algorithms achieving
this objective. The first algorithm commits tokens monotonically and is
particularly easy to analyse. The second algorithm additionally allows
local rearrangements of committed tokens, leading to substantially faster
convergence while preserving synchronization-free watermark recovery.

\subsection{Basic Commit Algorithm}
\label{sec:basic-commit}

The Basic Commit Algorithm is the simplest synchronization-free
watermarking strategy for diffusion language models. The algorithm scans
the sequence from left to right and commits a token as soon as its left
watermark constraint is satisfied. Once committed, a token is never
modified again and therefore serves as a permanent synchronization point
for all subsequent iterations.

Initially all tokens are uncommitted. During every diffusion iteration,
each uncommitted token is examined independently. If the watermark
constraint with its left neighbour is satisfied, the token becomes
permanently committed; otherwise it remains uncommitted and will be
reconsidered during the next iteration. Since only uncommitted tokens may
change, the number of committed tokens can only increase.
%
%
\begin{algorithm}[H]
\caption{Basic Commit Algorithm}
\label{alg:basic}
\begin{algorithmic}[1]
\Require Initial sequence of $N$ tokens
\Ensure Fully committed watermark
\State Mark every token as \emph{uncommitted}.
\Repeat
\For{$i=2,\ldots,N$}
\If{token $i$ is uncommitted}
\If{the left watermark constraint is satisfied}
\State Permanently commit token $i$.
\EndIf
\EndIf
\EndFor
\Until{no further commitments occur}
\end{algorithmic}
\end{algorithm}
%
\noindent
\underline{Probabilistic Analysis}
The analysis is based on Assumption~\ref{ass:probability}. Recall that
the probability that the left watermark constraint is satisfied equals
$\varepsilon$, where $\frac12\le\varepsilon\le1$. The parameter
$\varepsilon$ models the effectiveness of the watermark embedding
procedure and depends on the amount of logit bias applied by the language
model.
Let $U_t$ denote the number of uncommitted tokens after the $t$-th
diffusion iteration. During one iteration every currently uncommitted
token becomes committed independently with probability $\varepsilon$.
Consequently, it remains uncommitted with probability $1-\varepsilon$.
Therefore,
$
E[U_{t+1}]
=
(1-\varepsilon)E[U_t].
$
Since initially $U_0=N$,
$
E[U_t]
=
N(1-\varepsilon)^t.
$
Thus the expected number of uncommitted tokens decreases exponentially.
\begin{theorem}
Assume that the events determining whether the left watermark constraint
holds are mutually independent and occur with probability
$\varepsilon>0$. Then after $t$ diffusion iterations,
$
E[U_t]
=
N(1-\varepsilon)^t.
$
Consequently,
$
\lim_{t\rightarrow\infty}E[U_t]=0,
$
and every token is eventually committed with probability one.
\end{theorem}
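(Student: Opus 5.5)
The plan is to model each token individually and then sum, instead of working only with the recursion for $E[U_t]$. For each token $i\in\{1,\ldots,N\}$, let $\tau_i$ be the index of the first iteration at which token $i$ is committed, and write $U_t=\sum_{i=1}^N \mathbf 1[\tau_i>t]$. By hypothesis, in every iteration an uncommitted token has its left constraint satisfied with probability $\varepsilon$, independently of earlier iterations and of the other tokens. So $\Pr[\tau_i>t]$ is the probability that the constraint fails in each of the first $t$ independent trials, which is $(1-\varepsilon)^t$. In other words, $\tau_i$ is geometric with parameter $\varepsilon$. Linearity of expectation then gives $E[U_t]=\sum_i\Pr[\tau_i>t]=N(1-\varepsilon)^t$, which recovers the recursion $E[U_{t+1}]=(1-\varepsilon)E[U_t]$ derived just before the statement.

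For the limit, note that $0\le 1-\varepsilon<1$ whenever $\varepsilon>0$. Hence $(1-\varepsilon)^t\to 0$, and $E[U_t]\to 0$ because $N$ is fixed.

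For the almost-sure claim, I would argue as follows. Since committed tokens are never modified, the events $\{\tau_i>t\}$ decrease in $t$. Therefore $\Pr[\tau_i=\infty]=\lim_t(1-\varepsilon)^t=0$ by continuity of probability. A union bound over the finitely many tokens gives $\Pr[\exists i:\tau_i=\infty]\le\sum_i 0=0$. An alternative route uses Markov's inequality: $\Pr[U_t\ge1]\le E[U_t]\to0$, and because $U_t$ is nonincreasing, $\{U_t\ge 1\ \forall t\}$ is the decreasing intersection of these events, so it has probability zero.

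The main obstacle is justifying that the trials across iterations are independent and all have the same success probability $\varepsilon$. After a failed check, the token is resampled by the diffusion step, and the left neighbour may itself have changed. My plan is to take the theorem's hypothesis literally: the indicator events "left constraint holds at iteration $t$ for token $i$" are mutually independent with probability $\varepsilon$, as in Assumption~\ref{ass:probability}. I would state explicitly that this covers fresh resampling at every iteration, so no conditioning on earlier failures alters the probability. One minor boundary detail is that token $1$ has no left neighbour; I would treat it as committed from the start, which only decreases $U_t$, so the formula $N(1-\varepsilon)^t$ still holds as an upper bound, with equality under the paper's convention of counting all $N$ tokens.
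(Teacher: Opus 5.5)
Your proof is correct and is essentially the paper's argument: both rest on each uncommitted token independently surviving an iteration with probability $1-\varepsilon$, i.e.\ a geometric commitment time. You sum the per-token survival probabilities $\Pr[\tau_i>t]=(1-\varepsilon)^t$ by linearity where the paper iterates the recursion $E[U_{t+1}]=(1-\varepsilon)E[U_t]$, and your almost-sure step (continuity of probability plus a union bound over the $N$ tokens) is a slightly more careful version of the paper's appeal to $E[T]=1/\varepsilon<\infty$, which also correctly allows $\varepsilon=1$ (where the paper's ``$0<1-\varepsilon$'' fails) and rightly flags that Algorithm~\ref{alg:basic} never examines token~$1$.
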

\begin{proof}
Each uncommitted token survives one diffusion iteration with probability
$1-\varepsilon$. Hence
$
E[U_{t+1}]
=
(1-\varepsilon)E[U_t].
$
Repeated substitution gives
$
E[U_t]
=
N(1-\varepsilon)^t.
$
Since $0<1-\varepsilon<1$,
$
(1-\varepsilon)^t\rightarrow0
\qquad\text{as}\qquad
t\rightarrow\infty.
$
Therefore,
$
E[U_t]\rightarrow0.
$
Furthermore, the waiting time until an individual token becomes
committed follows a geometric distribution with parameter
$\varepsilon$. Its expected value equals
$
E[T]=\frac1{\varepsilon},
$
which is finite whenever $\varepsilon>0$. Consequently every token is
committed almost surely, implying convergence of the algorithm.
\hfill$\Box$
\end{proof}
The Basic Commit Algorithm has three important properties.
\begin{itemize}
\item The number of committed tokens is monotonically increasing.
\item Once committed, a token is never modified again.
\item The convergence speed depends only on the embedding probability
$\varepsilon$. In particular, the expected number of diffusion
iterations required to reduce the number of uncommitted tokens to a
constant is approximately
$
\frac{\ln N}{-\ln(1-\varepsilon)}
\approx
\frac{\ln N}{\varepsilon},
$
where the approximation follows from
$\ln(1-\varepsilon)\approx-\varepsilon$ for moderate values of
$\varepsilon$.
\end{itemize}

The algorithm propagates commitment information only from left to right.
Consequently, long runs of initially uncommitted tokens may require many
diffusion iterations before becoming fully committed. This limitation
motivates the Refined Commit Algorithm presented in the next section,
where commitment may propagate from either direction.

\subsection{Refined Commit Algorithm}
\label{sec:refined-commit}
The Basic Commit Algorithm commits a token whenever the \emph{left}
constraint is satisfied. It ignores the right constraint completely.
Although remarkably simple, this strategy commits many correct tokens but
fails to exploit additional information available from the neighbouring
token. In particular, whenever the left constraint holds while the right
one fails, the algorithm cannot determine whether the inconsistency
originates from the current token or from its right neighbour.

The Refined Commit Algorithm addresses this limitation by examining both
neighbouring constraints. Whenever the left constraint holds, the current
token is committed as in the Basic Commit Algorithm. If the right
constraint also holds, no further action is required. Otherwise, the
algorithm assumes that the inconsistency is more likely to originate from
the neighbouring token and revokes its commitment, allowing it to be
reconsidered during a later iteration. In this way, local watermark
errors are gradually propagated through the text until they disappear.
Consequently,
$
\Pr[L]=\varepsilon,
\Pr[R]=\frac12; 
$
and therefore
$
\Pr[L\land R]=\frac{\varepsilon}{2};
$
$
\Pr[L\land\overline{R}]=\frac{\varepsilon}{2};
$
$
\Pr[\overline{L}\land R]=\frac{1-\varepsilon}{2};
$
$
\Pr[\overline{L}\land\overline{R}]
=\frac{1-\varepsilon}{2},
$
where $\Pr[L]$ and $\Pr[R]$ stands for probabilities that the left and right constraints hold.
Hence the probability that exactly one constraint is satisfied equals
$
\Pr[\text{exactly one}]
=
\frac12,
$
independently of $\varepsilon$.


Let
$
L_i=
\bigl(
H(K,t_{i-1})=c(t_i)
\bigr)
$
denote the left constraint and
$
R_i=
\bigl(
H(K,t_i)=c(t_{i+1})
\bigr)
$
the right constraint, where $c(t_i)$ stands for a colour of the token $t_i$.
The refined algorithm is given below.
\begin{algorithm}[H]
\caption{Refined Commit Algorithm}
\label{alg:refined}
\begin{algorithmic}[1]
\For{every uncommitted token $t_i$}
    \If{$L_i$ and $R_i$ hold}
        \State Commit $t_i$
    \ElsIf{$L_i$ holds and $R_i$ fails}
        \State Commit $t_i$
        \State Revoke commitment of $t_{i+1}$
    \ElsIf{$L_i$ fails and $R_i$ holds}
        \State Commit $t_i$
        \State Revoke commitment of $t_{i-1}$
    \Else
        \State Leave $t_i$ uncommitted
    \EndIf
\EndFor
\end{algorithmic}
\end{algorithm}
Unlike the Basic Commit Algorithm, the refined procedure actively repairs
local inconsistencies by moving them towards neighbouring tokens. The
watermark therefore gradually ``self-corrects'' during successive
iterations.

Let
$
C_t
$
denote the fraction of committed tokens after iteration $t$.
A token becomes committed whenever at least one of the two constraints
holds. Hence
$
\Pr[\text{commit}]
=
\Pr[L\cup R]
=
\varepsilon+\frac12-\frac{\varepsilon}{2}
=
\frac{1+\varepsilon}{2}.
$
Thus, during the first iteration,
$
C_1=\frac{1+\varepsilon}{2}.
$
For the conservative choice $\varepsilon=\frac12$,
$
C_1=\frac34,
$
whereas for nearly perfect watermark embedding
($\varepsilon\approx1$),
$
C_1\approx1.
$


Thus, during the first iteration, $C_1=\frac{1+\varepsilon}{2}$. For the conservative choice
$\varepsilon=\frac12$, $C_1=\frac34$, whereas for nearly perfect watermark embedding
($\varepsilon\approx1$), $C_1\approx1$. Recall from Section~\ref{sec:basic-commit} that the
Basic Commit Algorithm's first-round commitment probability is $\varepsilon$. The Refined
Commit Algorithm therefore commits more tokens than Basic already in the first round,
though the size of this advantage depends on $\varepsilon$, as the following theorem shows.

\begin{theorem}
Under the independence assumption, $C_1=\frac{1+\varepsilon}{2}$, and
$\frac34\le C_1\le1$, for every $\frac12\le\varepsilon\le1$. Moreover,
\[
C_1-\varepsilon=\frac{1-\varepsilon}{2},
\]
so the first-round advantage of the Refined Commit Algorithm over the Basic Commit
Algorithm is $\frac{1-\varepsilon}{2}$. This gain is maximal, equal to $\frac14$, at the
conservative choice $\varepsilon=\frac12$, and vanishes as $\varepsilon\to1$: when the
embedding bias is already strong, Basic's own commit probability approaches~$1$ and there is
little room left for the right-constraint repair mechanism to add value.
\end{theorem}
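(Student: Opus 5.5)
The argument is a direct computation from Assumption~\ref{ass:probability}. I will start from the commit rule of Algorithm~\ref{alg:refined}: in the first iteration every token is uncommitted. A token is committed exactly when at least one of $L_i$, $R_i$ holds, that is, in the first three branches. The first point to settle is what ``committed after round one'' means given the revocation steps. I will follow the paper's accounting and take $C_1$ to be the per-token probability that the commit branch fires, $\Pr[L\cup R]$. I will treat the revocations as the mechanism feeding later rounds, so they are not subtracted from $C_1$.

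With this reading, I will apply inclusion--exclusion together with the independence of $L$ and $R$. Since $\Pr[L]=\varepsilon$ and $\Pr[R]=\tfrac12$, we get $\Pr[L\cap R]=\varepsilon/2$. Hence
\[
C_1=\varepsilon+\tfrac12-\tfrac{\varepsilon}{2}=\tfrac{1+\varepsilon}{2}.
\]
Equivalently, $C_1=1-\Pr[\overline L\wedge\overline R]=1-\tfrac{1-\varepsilon}{2}$, which matches the four-case table computed just before the algorithm.

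The bounds follow because $C_1$ is affine and increasing in $\varepsilon$. At $\varepsilon=\tfrac12$ it equals $\tfrac34$, and at $\varepsilon=1$ it equals $1$, so $\tfrac34\le C_1\le1$ on the whole interval.

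For the comparison with the Basic Commit Algorithm, I will recall from Section~\ref{sec:basic-commit} that Basic commits a token in the first round exactly when $L$ holds, with probability $\varepsilon$. Subtracting gives
\[
C_1-\varepsilon=\tfrac{1-\varepsilon}{2}=\Pr[\overline L\wedge R].
\]
This identity also explains the gain: it comes entirely from the tokens rescued by the right constraint. The gain is strictly decreasing in $\varepsilon$. Its maximum on $[\tfrac12,1]$ is therefore attained at $\varepsilon=\tfrac12$, where it equals $\tfrac14$, and it tends to $0$ as $\varepsilon\to1$.

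The only step I expect to need care is the interpretive one at the start. Revocation of $t_{i\pm1}$ could reduce the realized committed fraction, and neighbouring constraints share tokens, so $R_i$ and $L_{i+1}$ are the same event. I will state explicitly that $C_1$ is the marginal per-token commit probability under the independence assumption, and that revocations are deferred to subsequent iterations. Once that is fixed, every other step is elementary algebra.
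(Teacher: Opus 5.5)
Your proposal is correct and follows the paper's proof essentially verbatim: inclusion--exclusion with $\Pr[L]=\varepsilon$, $\Pr[R]=\tfrac12$, $\Pr[L\wedge R]=\tfrac{\varepsilon}{2}$, the bounds read off from the range of $\varepsilon$, and subtraction of Basic's first-round probability $\varepsilon$ to get the gain. Your explicit caveat is a reasonable clarification of an assumption the paper leaves implicit, not a change of method: you take $C_1$ to be the marginal probability that the commit branch fires, with revocations deferred to later rounds, and you note that $R_i$ coincides with $L_{i+1}$.
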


\begin{proof}
A token is committed whenever at least one neighbouring constraint is satisfied. By
inclusion--exclusion,
\[
\Pr[L\cup R] = \Pr[L]+\Pr[R]-\Pr[L\land R].
\]
Substituting $\Pr[L]=\varepsilon$, $\Pr[R]=\frac12$, $\Pr[L\land R]=\frac{\varepsilon}{2}$,
gives
\[
C_1 = \varepsilon+\frac12-\frac{\varepsilon}{2} = \frac{1+\varepsilon}{2}.
\]
Since $\frac12\le\varepsilon\le1$, we immediately obtain $\frac34\le C_1\le1$. Recalling that
the Basic Commit Algorithm's first-round commit probability is $\varepsilon$ (Theorem~2), the
gain of Refined over Basic is
\[
C_1-\varepsilon = \frac{1+\varepsilon}{2}-\varepsilon = \frac{1-\varepsilon}{2},
\]
which proves the claim.
\hfill$\square$
\end{proof}

The refined algorithm offers two advantages over the Basic Commit Algorithm. First, the
expected fraction of committed tokens after one iteration increases from $\varepsilon$ to
$\frac{1+\varepsilon}{2}$, a gain of $\frac{1-\varepsilon}{2}$. This gain is largest,
$\frac14$, at the conservative choice $\varepsilon=\frac12$, and shrinks toward $0$ as
$\varepsilon\to1$: when the embedding bias is already strong, Basic's own commit rate is
close to its ceiling and there is little room left for the right-constraint repair
mechanism to add value. Second, local watermark inconsistencies are not merely ignored but
are actively propagated towards neighbouring tokens. Consequently, repeated iterations
gradually eliminate isolated errors and improve the density of committed tokens, with the
largest benefit realised precisely when the embedding bias $\varepsilon$ is weak. The
convergence behaviour of this iterative correction process is analysed in the following
subsection.

\subsection{Sliding Commit Algorithm}
\label{sec:sliding}

The Basic Commit Algorithm commits a token whenever its left watermark
constraint is satisfied. The Refined Commit Algorithm additionally
corrects local inconsistencies by revoking neighbouring committed tokens.
Both algorithms are monotone in the sense that, apart from local
corrections, the number of committed tokens generally increases during
successive iterations.
The following algorithm adopts a different philosophy. Instead of fixing
incorrect commitments locally, it allows the commitment boundary itself
to move through the text. Consequently, committed tokens may become
uncommitted while neighbouring uncommitted tokens become committed.
The commitment therefore ``slides'' until a stable configuration is
reached.


Suppose that an uncommitted token $t_i$ lies between two committed
tokens $t_{i-1}$ and $t_{i+1}$.
The algorithm proceeds as follows.
\begin{enumerate}
\item
If both watermark constraints hold, commit $t_i$.
\item
If only the left constraint holds, commit $t_i$ and revoke the
commitment of $t_{i+1}$.
\item
If only the right constraint holds, commit $t_i$ and revoke the
commitment of $t_{i-1}$.
\item
If neither constraint holds, no modification is made.
\end{enumerate}
Thus every successful commitment either enlarges the committed region
or shifts it by one position. 
\begin{algorithm}[H]
\caption{Sliding Commit Algorithm}
\label{alg:sliding}
\begin{algorithmic}[1]
\Repeat
\For{every uncommitted token $t_i$ whose neighbours are committed}
    \If{left and right constraints hold}
        \State Commit $t_i$
    \ElsIf{left constraint holds}
        \State Commit $t_i$
        \State Uncommit $t_{i+1}$
    \ElsIf{right constraint holds}
        \State Commit $t_i$
        \State Uncommit $t_{i-1}$
    \EndIf
\EndFor
\Until{no further changes occur}
\end{algorithmic}
\end{algorithm}


Assume the probabilistic model introduced in
Assumption~\ref{ass:probability}. Then
$
\Pr[L]=\varepsilon,
\Pr[R]=\frac12,
$
where the left and right constraints are mutually independent.
Consequently,
$
\Pr[L\wedge R]
=
\frac{\varepsilon}{2},
$
$
\Pr[L\wedge\overline{R}]
=
\frac{\varepsilon}{2},
$
$
\Pr[\overline{L}\wedge R]
=
\frac{1-\varepsilon}{2},
$
and
$
\Pr[\overline{L}\wedge\overline{R}]
=
\frac{1-\varepsilon}{2}.
$
Therefore,
$
\Pr[\text{a sliding operation occurs}]
=
\Pr[L\wedge\overline{R}]
+
\Pr[\overline{L}\wedge R]
=
\frac12.
$
Remarkably, the probability of moving the commitment boundary is
independent of $\varepsilon$.
%
\begin{theorem}
Under Assumption~\ref{ass:probability}, every active boundary advances by one position with
probability
$
\frac12,
$
during each iteration. Consequently, the expected displacement after
$t$ iterations equals
$
\frac{t}{2},
$
while the variance equals
$
\frac{t}{4}.
$
Thus the commitment boundary performs a biased random walk whose drift
is independent of the watermark embedding probability
$\varepsilon$.
\end{theorem}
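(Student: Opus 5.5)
The plan is to model the position of a single active commitment boundary as a sum of independent Bernoulli increments, one per iteration, and then read off the three claimed quantities. Fix an active boundary, meaning an uncommitted token $t_i$ whose neighbours are committed, as in Algorithm~\ref{alg:sliding}. At iteration $k$ let $\xi_k\in\{0,1\}$ indicate whether the boundary moves by one position. By the case analysis preceding the statement, a slide occurs exactly in the event $(L\wedge\overline R)\cup(\overline L\wedge R)$. Under Assumption~\ref{ass:probability}, $L$ and $R$ are independent with $\Pr[L]=\varepsilon$ and $\Pr[R]=\frac12$, so
\[
\Pr[\xi_k=1]=\frac{\varepsilon}{2}+\frac{1-\varepsilon}{2}=\frac12 ,
\]
independently of $\varepsilon$. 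This establishes the first assertion.

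Next I would argue that the $\xi_k$ are mutually independent across iterations. The colours examined at iteration $k$ belong to the newly exposed token pair, because the boundary has moved or the token has been re-sampled. Assumption~\ref{ass:probability} treats each generated colour as a fresh independent variable, and the right constraint is governed by the pseudorandom hash. So $\xi_1,\ldots,\xi_t$ are i.i.d.\ Bernoulli$(\frac12)$, and the displacement is $D_t=\sum_{k=1}^t\xi_k\sim\mathrm{Binomial}(t,\frac12)$. Linearity of expectation gives $E[D_t]=t/2$. Independence gives $\mathrm{Var}(D_t)=t\cdot\frac12\cdot\frac12=t/4$. Hence $D_t$ is a lazy, biased walk with drift $\frac12$ per step and no dependence on $\varepsilon$.

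The main obstacle is giving honest meaning to ``advances by one position''. A left-only event pushes the boundary rightward by uncommitting $t_{i+1}$, while a right-only event uncommits $t_{i-1}$, which looks like a move in the opposite direction. I would handle this by defining displacement as the number of positions the boundary has moved, counting each slide as $+1$, since each slide relocates the uncommitted site by one. I would state this as the interpretation under which the statement holds.

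If instead directions are signed, the walk has steps $+1$, $-1$, $0$ with probabilities $\frac{\varepsilon}{2}$, $\frac{1-\varepsilon}{2}$, $\frac12$. Its drift is $\varepsilon-\frac12$, which does depend on $\varepsilon$. In that reading the claimed moments would not hold as stated. I would add a remark saying so rather than force the claim. A secondary issue is that two active boundaries may collide or merge. I would restrict the statement to a single isolated boundary, so the random walk is well defined until that time.
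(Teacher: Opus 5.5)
Your proof is essentially the paper's: it also gets $\Pr[L\wedge\overline R]+\Pr[\overline L\wedge R]=\frac12$ and counts each slide as one step, whether it uncommits $t_{i+1}$ or $t_{i-1}$, reading $E=t/2$ and $\mathrm{Var}=t/4$ off $\mathrm{Binomial}(t,\frac12)$, and it takes the cross-iteration independence you justify for granted. Your caveat is correct and the paper does not address it: since left-only slides move the uncommitted site right with probability $\frac{\varepsilon}{2}$ and right-only slides move it left with probability $\frac{1-\varepsilon}{2}$, the signed drift is $\varepsilon-\frac12$, so the $\varepsilon$-independent ``displacement'' in the statement is really just the number of moves.
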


\begin{proof}
A boundary movement occurs precisely when exactly one of the two
constraints holds.
Since
$
\Pr[L\wedge\overline{R}]
=
\frac{\varepsilon}{2}
$
and
$
\Pr[\overline{L}\wedge R]
=
\frac{1-\varepsilon}{2},
$
their sum equals
$
\frac{\varepsilon}{2}
+
\frac{1-\varepsilon}{2}
=
\frac12.
$
Each successful movement shifts the boundary by one token.
The number of successful movements after $t$ iterations therefore
follows the binomial distribution
$
X_t\sim\mathrm{Binomial}
\left(
t,\frac12
\right).
$
Hence
$
E[X_t]
=
\frac{t}{2},
$
and
$
\mathrm{Var}(X_t)
=
\frac{t}{4},
$
which proves the theorem.
\hfill$\Box$
\end{proof}


Unlike the Basic and Refined Commit Algorithms, the Sliding Commit
Algorithm does not attempt to preserve every existing commitment.
Instead, commitments migrate through the text until neighbouring
constraints become mutually consistent. In this respect the algorithm
resembles local optimisation methods, where the current solution is
allowed to deteriorate temporarily in order to obtain a better global
configuration.
Another attractive property is that the expected speed of propagation
is independent of the embedding quality $\varepsilon$. Increasing the
logit bias improves the probability that the correct token colour is
generated, but does not change the expected rate at which commitment
boundaries move through the document.

\subsection{Comparison of Commit Algorithms}
\label{sec:commit-comparison}

Table~\ref{tab:commit-comparison} summarises the three commit
algorithms introduced above. All three share the same underlying
constraint model ($\Pr[L]=\varepsilon$, $\Pr[R]=\tfrac12$, independent)
but differ in which constraints trigger a commit, whether existing
commitments can be revoked, and where in the sequence they operate.
\begin{table}[ht]
\centering
\caption{Comparison of the three commit algorithms for diffusion watermarking.}
\label{tab:commit-comparison}
{\tiny
\begin{tabular}{lccc}
\hline
\textbf{Property} & \textbf{Basic} & \textbf{Refined} & \textbf{Sliding} \\
\hline
Commit rule &
$L$ holds &
$L\lor R$ holds &
boundary token, $L$ or $R$ holds \\
Right constraint used &
no &
yes (repair only) &
yes (repair + commit) \\
Can revoke commitments &
no &
yes, on the neighbour &
yes, on one neighbour \\
Applies to &
every uncommitted token &
every uncommitted token &
uncommitted tokens flanked by committed neighbours \\
Monotonic in $|{\rm committed}|$ &
yes &
no (local repair) &
no (boundary shifts) \\
First-iteration commit prob.\ &
$\varepsilon$ &
$(1+\varepsilon)/2$ &
$\tfrac12$ (boundary advances) \\
Depends on $\varepsilon$ &
yes, directly &
only through the gain term &
no \\
Convergence measure &
$E[U_t]=N(1-\varepsilon)^t$ &
not closed-form; errors self-correct &
$E[\text{displacement}]=t/2$, $\mathrm{Var}=t/4$ \\
Iterations to convergence &
$O(\ln N/\varepsilon)$ &
slower than Basic (empirical) &
$O(N)$ \\
Propagation direction &
left-to-right only &
bidirectional, local &
bidirectional, boundary walk \\
\hline
\end{tabular}
}
\end{table}

\paragraph{Discussion.}
The three algorithms occupy distinct points on a speed/robustness
trade-off. The Basic Commit Algorithm is the simplest and fastest to
analyse: since it never revokes a commitment, its expected number of
uncommitted tokens decays exponentially, $E[U_t]=N(1-\varepsilon)^t$,
requiring only $O(\ln N/\varepsilon)$ iterations to stabilise. Its
weakness is that it ignores the right constraint entirely, so an
incorrectly committed token can never be corrected once fixed.

The Refined Commit Algorithm keeps the same commit condition on $L$
but additionally uses $R$ to decide whether the \emph{neighbouring}
token's commitment should be revoked. 
This raises the first-iteration commit probability from $\varepsilon$ to $(1+\varepsilon)/2$
--- a gain of $\frac{1-\varepsilon}{2}$, largest at the conservative choice $\varepsilon=\frac12$
and shrinking to $0$ as $\varepsilon\to1$, since $\Pr[\text{exactly one constraint holds}]=\frac12$
is itself independent of $\varepsilon$ even though the resulting gain over Basic is not.

The Sliding Commit Algorithm is qualitatively different: rather than
committing tokens throughout the sequence, it operates only at the
boundary between committed and uncommitted regions, and a successful
step moves that boundary by exactly one token. Because the probability
of such a step is $\tfrac12$ independently of $\varepsilon$, the
boundary behaves as a driftless-in-$\varepsilon$ random walk with
$E[\text{displacement}]=t/2$ and $\mathrm{Var}=t/4$. The
practical consequence is that Sliding needs $O(N)$ iterations for a
single boundary to traverse the sequence --- asymptotically slower
than Basic's $O(\ln N/\varepsilon)$ --- but it guarantees that every
local inconsistency is eventually pushed out of the committed region
rather than being frozen in place, which neither Basic nor Refined
guarantees on its own.

In short: Basic is fastest but cannot repair errors; Refined adds a
fixed, $\varepsilon$-independent improvement in commit rate together
with local self-correction; Sliding sacrifices asymptotic speed for a
guarantee of eventual consistency along the commitment boundary. A
practical implementation may combine Refined commits in the interior
of the sequence with Sliding-style boundary repair to obtain both the
faster convergence rate of Refined and the correctness guarantee of
Sliding.

\begin{example}
We compare the number of denoising iterations required to fully
commit a sequence of $n$ tokens under the Basic, Refined, and
Sliding commit algorithms, fixing the embedding-bias strength at
$\varepsilon=0.75$. For each algorithm we report $t_{0.90}$ and
$t_{0.99}$, the smallest number of iterations $t$ such that the
whole sequence is committed with probability at least $0.90$ and
$0.99$ respectively.
Table~\ref{tab:sim-comparison} reports $t_{0.90}$ and $t_{0.99}$
for $n\in\{32,64,128,512,1024\}$.
\begin{table}[ht]
\centering
\caption{Simulated/exact number of denoising iterations $t_p$
required to commit the entire sequence with probability $p$, for
$\varepsilon=0.75$. Basic and Sliding are computed exactly;
Refined is estimated from $4{,}000$ Monte Carlo trials.}
\label{tab:sim-comparison}
\begin{tabular}{r rrr rrr}
\hline
 & \multicolumn{3}{c}{$p=0.90$} & \multicolumn{3}{c}{$p=0.99$} \\
\cline{2-4}\cline{5-7}
$n$ & Basic & Refined & Sliding & Basic & Refined & Sliding \\
\hline
$32$   & $5$ & $11$ & $72$    & $6$ & $16$ & $82$    \\
$64$   & $5$ & $13$ & $141$   & $7$ & $18$ & $154$   \\
$128$  & $6$ & $15$ & $275$   & $7$ & $20$ & $293$   \\
$512$  & $7$ & $18$ & $1063$  & $8$ & $23$ & $1099$  \\
$1024$ & $7$ & $19$ & $2104$  & $9$ & $25$ & $2153$  \\
\hline
\end{tabular}
\end{table}
Three patterns emerge.
Basic scales logarithmically in $n$, exactly as predicted by
$t_p=O(\ln N/\varepsilon)$: moving from $n=32$ to $n=1024$ (a
$32\times$ increase) raises $t_{0.90}$ from only $5$ to $7$
iterations.

Sliding scales linearly in $n$, and is dramatically slower
than the other two algorithms for large sequences: at $n=1024$ it
requires roughly $2100$ iterations versus fewer than $20$ for
Basic or Refined. This is a direct, quantitative consequence of
Theorem~\ref{thm:sliding} --- an $\varepsilon$-independent advance
probability of $\tfrac12$ forces $\Theta(n)$ iterations to sweep
a single boundary across the sequence, regardless of how strong
the watermark bias is. Sliding is therefore unsuitable as a
stand-alone algorithm for long sequences; its value lies in local
self-repair rather than driving full-sequence convergence.

Refined is slower than Basic at every value of $n$ tested,
which at first appears to contradict
Section~\ref{sec:refined-commit}'s observation that Refined's
per-token, per-iteration commit probability, $(1+\varepsilon)/2$,
exceeds Basic's $\varepsilon$. The resolution is that Refined's
advantage is local and per-token, while full-sequence completion
requires \emph{all} $n-1$ tokens to be \emph{simultaneously and
durably} committed. Because a token can be revoked by a neighbour
whose own right-constraint check fails purely by chance
(probability $\tfrac12$, independent of correctness), already-valid
commitments face a persistent, $\varepsilon$-independent hazard of
being undone. For a single token this repair mechanism is a net
benefit, but across a long sequence it introduces churn that
delays the moment at which the \emph{entire} sequence is
simultaneously stable, offsetting Refined's higher instantaneous
commit rate. This suggests that Refined is best used together with
Basic-style permanent commitment once a token's constraints have
been verified for several consecutive iterations, rather than
being applied uniformly throughout the diffusion process.
\end{example}

\subsection{Future Research}
\label{sec:diffusion-future}

\paragraph{Watermark-aware denoising.}
One may seek denoising strategies that maximise the number of
satisfied congruences at every step while preserving text quality,
potentially reducing denoising steps.

\paragraph{Variable-length blocks.}
If the denoising mechanism permits insertion of dummy tokens or
deletion of tokens, the synchronisation-free property is preserved
by assigning dummy tokens a reserved identifier contributing no
watermark observation. Their insertion or removal affects neither
evaluation points nor embedded bits of surrounding pairs.

\paragraph{Theoretical convergence analysis.}
A rigorous bound on the number of denoising steps as a function of
$n$, watermark density, and the LLM's token distribution remains
an important open problem. The analysis above suggests that
diffusion LLMs require only $O(\log n)$ denoising steps to embed
an $n$-bit watermark, which is negligible relative to typical
diffusion generation budgets.

\section{Conclusions and Future Research}

We have presented a new synchronization-free watermarking scheme for Large Language Models based on binary congruence constraints derived from Reed--Solomon polynomial evaluations. Unlike existing watermarking approaches, which typically embed consecutive blocks of watermark bits, the proposed method associates every watermark with a single token pair. A cryptographic hash computed from one token determines the evaluation point, while the corresponding parity bit is embedded into the following token. Consequently, every watermark is locally verifiable and independent of the surrounding text, making the scheme naturally robust against insertions, deletions, substitutions, and token reordering.

The paper provides the mathematical foundations of the proposed construction, including probabilistic analysis of binary congruence systems, sufficient conditions for reliable recovery, several recovery algorithms covering different identity sizes, and an error analysis based on the Binary Symmetric Channel model. The obtained results indicate that only a modest redundancy is required to recover the embedded identity with high probability even when a significant fraction of watermark bits is corrupted.

An important advantage of the proposed framework is its scalability. Although the paper considers embedding a single secret identity represented by one Reed--Solomon codeword, the construction naturally generalises to arbitrarily long identities. A long secret can be partitioned into multiple short segments (for example, 32- or 64-bit blocks), each protected by an appropriately sized Reed--Solomon code. During embedding, the watermark corresponding to each token pair is determined by a small number of bits extracted from a cryptographic hash, which pseudorandomly selects both the identity segment and the evaluation point within its associated code. As a result, information belonging to different parts of the secret is uniformly scattered throughout the generated text rather than stored in contiguous blocks. This randomisation improves robustness against local text modifications and preserves the synchronization-free nature of the scheme.

Several research directions remain open. The most important theoretical problem is the design of optimal collections of binary congruence constraints that minimise the number of transmitted watermark bits while guaranteeing unique recovery. It would also be interesting to investigate decoding algorithms based on SAT/SMT solvers, lattice reduction, and probabilistic inference for very large identity spaces. Finally, practical implementation within modern LLM decoding pipelines, together with extensive evaluation against paraphrasing attacks, machine translation, summarisation, and adversarial editing, will provide further insight into the practical security and robustness of the proposed watermarking framework.

{\footnotesize
\bibliographystyle{plain}
\bibliography{LLM}

@article{qu2024provably,
  author  = {Qu, Wenjie and Zheng, Wengrui and Tao, Tianyang and Yin, Dong and
             Jiang, Yanze and Tian, Zhihua and Zou, Wei and Jia, Jinyuan and
             Zhang, Jiaheng},
  title   = {Provably Robust Multi-bit Watermarking for {AI}-generated Text},
  journal = {arXiv preprint arXiv:2401.16820},
  year    = {2025},
  note    = {v5, 28 Jan 2025},
  url     = {https://arxiv.org/abs/2401.16820}
}

@inproceedings{Zhao_2023,
  author    = {Xuandong Zhao and Prabhanjan Vijendra Ananth and Lei Li and Yu-Xiang Wang},
  title     = {Provable Robust Watermarking for AI-Generated Text},
  booktitle = {The Twelfth International Conference on Learning Representations (ICLR)},
  year      = {2024},
  url       = {https://openreview.net/forum?id=SsmT8aO45L}
}

@inproceedings{Niess_2025,
  author    = {Georg Niess and Roman Kern},
  title     = {Ensemble Watermarks for Large Language Models},
  booktitle = {Proceedings of the 63rd Annual Meeting of the Association for Computational Linguistics (ACL)},
  year      = {2025},
  doi       = {10.18653/v1/2025.acl-long.145},
  url       = {https://aclanthology.org/2025.acl-long.145/}
}

@article{Nature2024SynthID,
  author  = {Pushmeet Kohli and others},
  title   = {SynthID-Text: Practical Watermarking for Large Language Models},
  journal = {Nature},
  year    = {2024}
}

@article{Kuditipudi2024,
  author  = {Rohith Kuditipudi and others},
  title   = {Robust and Semantically Invariant Watermarks for Large Language Models},
  journal = {ICLR},
  year    = {2024}
}

@book{cox2007,
  author    = {Ingemar J. Cox and Matthew L. Miller and Jeffrey A. Bloom and Jessica Fridrich and Ton Kalker},
  title     = {Digital Watermarking and Steganography},
  publisher = {Morgan Kaufmann},
  edition   = {2},
  year      = {2007}
}

@article{Morozov2024,
  author  = {Ruslan Morozov and Tolga M. Duman},
  title   = {Markov Insertion/Deletion Channels: Information Stability and Capacity Bounds},
  journal = {arXiv preprint arXiv:2401.16063},
  year    = {2024}
}

@article{Mitzenmacher2009,
  author  = {Michael Mitzenmacher},
  title   = {A Survey of Results for Deletion Channels and Related Synchronization Channels},
  journal = {Probability Surveys},
  volume  = {6},
  pages   = {1--33},
  year    = {2009},
  doi     = {10.1214/08-PS141}
}

@book{Rabiner1989,
  author    = {Lawrence R. Rabiner},
  title     = {A Tutorial on Hidden Markov Models and Selected Applications in Speech Recognition},
  journal   = {Proceedings of the IEEE},
  volume    = {77},
  number    = {2},
  pages     = {257--286},
  year      = {1989},
  doi       = {10.1109/5.18626}
}

@article{Mushkin1989,
  author  = {Mordechai Mushkin and Isaac Bar-David},
  title   = {Capacity and Coding for the Gilbert--Elliott Channels},
  journal = {IEEE Transactions on Information Theory},
  volume  = {35},
  number  = {6},
  pages   = {1277--1290},
  year    = {1989},
  doi     = {10.1109/18.45284}
}

@article{Elliott1963,
  author  = {E. O. Elliott},
  title   = {Estimates of Error Rates for Codes on Burst-Noise Channels},
  journal = {Bell System Technical Journal},
  volume  = {42},
  number  = {5},
  pages   = {1977--1997},
  year    = {1963},
  doi     = {10.1002/j.1538-7305.1963.tb00955.x}
}

@article{Gilbert1960,
  author  = {Edgar N. Gilbert},
  title   = {Capacity of a Burst-Noise Channel},
  journal = {Bell System Technical Journal},
  volume  = {39},
  number  = {5},
  pages   = {1253--1265},
  year    = {1960},
  doi     = {10.1002/j.1538-7305.1960.tb03959.x}
}

@misc{Fu_2024,
	author = {Jiayi Fu and Xuandong Zhao and Ruihan Yang and Yuansen Zhang and Jiangjie Chen and Yanghua Xiao},
	howpublished = {ArXiv eprint 2402.12948},
	month = {02},
	title = {GumbelSoft: Diversified Language Model Watermarking via the GumbelMax-trick},
	url = {https://arxiv.org/pdf/2402.12948.pdf},
	year = {2024}}

@misc{wong_2025,
	author = {Ka Him Wong and Jicheng Zhou and Jiantao Zhou and Yain-Whar Si},
	title = {An End-to-End Model For Logits Based Large Language Models Watermarking},
	url = {https://openreview.net/forum?id=0KHW6yXdiZ},
	year = {2025}}

@inproceedings{Kirchenbauer2023Watermark,
  author    = {John Kirchenbauer and
               Jonas Geiping and
               Yuxin Wen and
               Jonathan Katz and
               Ian Miers and
               Tom Goldstein},
  title     = {A Watermark for Large Language Models},
  booktitle = {Proceedings of the 40th International Conference on Machine Learning (ICML 2023)},
  series    = {Proceedings of Machine Learning Research},
  volume    = {202},
  pages     = {17061--17084},
  year      = {2023},
  editor    = {Andreas Krause and
               Emma Brunskill and
               Kyunghyun Cho and
               Barbara Engelhardt and
               Sivan Sabato and
               Jonathan Scarlett},
  publisher = {PMLR},
  url       = {https://proceedings.mlr.press/v202/kirchenbauer23a.html}
}

@misc{Zhu_2024,
	author = {Chaoyi Zhu and Jeroen Galjaard and Pin-Yu Chen and Lydia Y. Chen},
	howpublished = {ArXiv eprint 2403.13000},
	month = {03},
	title = {Duwak: Dual Watermarks in Large Language Models},
	url = {https://arxiv.org/pdf/2403.13000.pdf},
	year = {2024}}

@misc{Xu_2024,
	author = {Zhenyu Xu and Kun Zhang and Victor S. Sheng},
	eprint = {2410.10876},
	howpublished = {ArXiv eprint 2410.10876},
	month = {10},
	title = {FreqMark: Frequency-Based Watermark for Sentence-Level Detection of LLM-Generated Text},
	url = {https://arxiv.org/pdf/2410.10876.pdf},
	year = {2024}}

@misc{Cui_2025,
	author = {Xinyue Cui and Johnny Tian-Zheng Wei and Swabha Swayamdipta and Robin Jia},
	howpublished = {ArXiv eprint 2503.04036},
	month = {03},
	title = {Robust Data Watermarking in Language Models by Injecting Fictitious Knowledge},
	url = {https://arxiv.org/pdf/2503.04036.pdf},
	year = {2025}}

@misc{Liu_2023,
	author = {Yixin Liu and Hongsheng Hu and Xun Chen and Xuyun Zhang and Lichao Sun},
	eprint = {2305.13257},
	howpublished = {arXiv preprint arXiv:2302.13971},
	month = {05},
	title = {Watermarking Text Data on Large Language Models for Dataset Copyright},
	url = {https://arxiv.org/pdf/2305.13257.pdf},
	year = {2023}}

@misc{Sander_2024,
	author = {Tom Sander and Pierre Fernandez and Alain Durmus and Matthijs Douze and Teddy Furon},
	eprint = {2402.14904},
	howpublished = {ArXiv eprint},
	month = {02},
	title = {Watermarking Makes Language Models Radioactive},
	url = {https://arxiv.org/pdf/2402.14904.pdf},
	year = {2024}}

@article{touvron_2023,
	author = {Hugo Touvron and Thibault Louvrier and Matthieu Cord and Piotr Bojanowski and Edouard Grave and Guillaume Lample},
	journal = {arXiv preprint arXiv:2302.13971},
	title = {{LLaMA}: Open and Efficient Foundation Language Models},
	url = {https://arxiv.org/abs/2302.13971},
	year = {2023}}

@inproceedings{chowdhery_2022,
	author = {Aakanksha Chowdhery and Anish Vaswani and Steven J. Rennie and Mikhail Pavlov and Jacob Devlin and Sanjay Aggarwal and Mike Lewis and Neil Houlsby and Colin Raffel and Barbara Plank and Lee Howard and Martin D. Riley and Michael Swietojanski and Mo Yu and Dipanjan Das and Mike Schuster and Yiming Yang and Jakob Uszkoreit and Yonghui Wu},
	booktitle = {Proceedings of the 39th International Conference on Machine Learning (ICML 2022)},
	title = {{PaLM}: Scaling Language Modeling with Pathways},
	url = {https://arxiv.org/abs/2204.02311},
	year = {2022}}

@misc{openai2023gpt4,
	author = {{OpenAI}},
	howpublished = {{\url{https://openai.com/research/gpt-4}}},
	title = {{GPT-4 Technical Report}},
	year = {2023}}
}

\end{document}